\definecolor{calpolypomonagreen}{rgb}{0.12, 0.3, 0.17}
\definecolor{byzantine}{rgb}{0.64, 0.14, 0.54}
\definecolor{cadmiumgreen}{rgb}{0.0, 0.42, 0.24}
\definecolor{caribbeangreen}{rgb}{0.0, 0.6, 0.46}
\newcommand{\Prob}{\ensuremath{\operatorname{Pr}}}
\newcommand{\D}{\ensuremath{\operatorname{DEG}}}
\definecolor{cupgreen}{rgb}{0,0.498,0.208}
\definecolor{cupblue}{rgb}{0,0,.5}
\definecolor{cupred}{rgb}{1,0.04,0}
\definecolor{cuppink}{rgb}{0.925,0,0.545}
\definecolor{cupmagenta}{rgb}{0.624,0.161,0.424}
\definecolor{cupbrown}{rgb}{0.71,0.212,0.133}
\definecolor{midblue}{rgb}{0.00,0.0,0.80}
\definecolor{darkblue}{rgb}{0.00,0.00,0.45}
\definecolor{thm}{rgb}{0.8,0,0.1}
\definecolor{sec}{rgb}{0,0,1}
\newtheorem{theorem}{{Theorem}}[section]
\theoremstyle{definition}
\newtheorem{definition}[theorem]{Definition}
\journal{Computers and Operations Research}
\begin{document}
	\begin{frontmatter}
		\title{Soft happy colourings and community structure of networks}
		
		\author[inst1]{Mohammad H. Shekarriz}
		\ead{m.shekarriz@deakin.edu.au}
		\affiliation[inst1]{
			organization={School of Information Technology, Deakin University},
			city={Geelong},
			state={VIC},
			country={Australia}}
		\author[inst1]{Dhananjay Thiruvady}
		\ead{dhananjay.thiruvady@deakin.edu.au}
		\author[inst1]{Asef Nazari}
		\ead{asef.nazari@deakin.edu.au}
		\author[inst2]{Rhyd Lewis}
		\ead{lewisr9@cardiff.ac.uk}
		\affiliation[inst2]{
			organization={School of Mathematics, Cardiff University},
			city={Cardiff},
			state={Wales},
			country={United Kingdom}
		}

		\begin{abstract}
			For $0<\rho\leq 1$, a $\rho$-happy vertex $v$ in a coloured graph $G$ has at least $\rho\cdot \deg(v)$ same-colour neighbours, and a $\rho$-happy colouring (aka soft happy colouring) of $G$ is a vertex colouring that makes all the vertices $\rho$-happy. A community is a subgraph whose vertices are more adjacent to themselves than the rest of the vertices. Graphs with community structures can be modelled by random graph models such as the stochastic block model (SBM). In this paper, we present several theorems showing that both of these notions are related, with numerous real-world applications. We show that, with high probability, communities of graphs in the stochastic block model induce $\rho$-happy colouring on all vertices if certain conditions on the model parameters are satisfied. Moreover, a probabilistic threshold on $\rho$ is derived so that communities of a graph in the SBM induce a $\rho$-happy colouring. Furthermore, the asymptotic behaviour of $\rho$-happy colouring induced by the graph's communities is discussed when $\rho$ is less than a threshold. We develop heuristic polynomial-time algorithms for soft happy colouring that often correlate with the graphs' community structure. Finally, we present an experimental evaluation to compare the performance of the proposed algorithms thereby demonstrating the validity of the theoretical results. 
		\end{abstract}
		
		\begin{keyword}
			Soft happy colouring, community detection, stochastic block model
		\end{keyword}
		
	\end{frontmatter}
	
	\section{Introduction}\label{Sec:Intro}
	Graph colouring is a powerful technique for solving complex network-related problems, with applications varying from engineering to quantum physics and social science to cybersecurity. Depending on the problem's nature, various types of graph colouring have already been introduced. From the conventional \emph{proper vertex colouring}~\cite{Colouring_book} to the recently defined \emph{happy colouring}~\cite{ZHANG2015117} (which is the subject of this paper), each of these versions appeared to be seminal in both theoretical and practical research.
	
	Happy colouring in graphs is proposed for detecting \emph{homophily}~\cite{Homophily} in social networks. For a $k$-coloured graph, Zhang and Li~\cite{ZHANG2015117} defined a vertex to be \emph{happy} if all its neighbours have the same colour as its colour, and an edge is \emph{happy} if both of its ends have the same colour. It is well-known that homophily in social networks can be expressed by their \emph{community structures} (see for example ~\cite{Fani2017, Solomon2019, Ye2020}), a structure that real-world networks often possess. In other words, homophily can be translated to grouping vertices into some sets in which they are connected with many edges to themselves and substantially fewer edges to the rest of the vertices. Each group is called a \emph{community}, aka a \emph{cluster}~\cite{10.1007/978-3-540-48413-4_23}. A community is then a subgraph whose vertices have most of their neighbours inside the community, and the problem has some similarities with finding the \emph{minimum cut} in the network~\cite{Newman2012}. Hence, by a community, we mean a group of vertices having an ample number of vertices that are adjacent together more than they are adjacent to the vertices of other communities.  
	This way, singletons are not considered as a community, while they are valid vertex cuts. 
	
	Zhang and Li in~\cite{ZHANG2015117} devised algorithms for maximising the number of happy vertices (MHV or $k$-MHV) 
	or maximising the number of happy edges (MHE or $k$-MHE) in a graph with some vertices already precoloured with $k$ colours. Additionally, they proved that MHV and MHE problems are solvable in polynomial time if $k\leq 2$, and they are NP-hard otherwise. Moreover, for $0<\rho\leq 1$, they introduced $\rho$-happy colouring as follows: a vertex $v$ is $\rho$-happy if its colour is the same as the colour of at least $\rho\cdot \deg(v)$ of its neighbours. The problem of maximising the number of $\rho$-happy vertices is called \emph{soft happy colouring}. Furthermore, they devised two algorithms for this problem, namely \hyperref[greedy1]{\sf Greedy-SoftMHV} and \hyperref[growth]{\sf Growth-SoftMHV}. 
	
	To the best of our knowledge, the problem of soft happy colourings has remained dormant. We focus on this problem 
	and our motivation to revisit this problem is its relation to the analysis of community structure in networks. However, conventional happy colouring (which is a special case of soft happy colouring when $\rho=1$) 
	and its two related problems, MHV and MHE, have already been the subject of several studies. For example, Aravind, Kalyanasundaram, and Kare~\cite{trees-happy} proposed linear time algorithms for MHV and MHE for trees. Meanwhile, Agrawal et al.~\cite{AGRAWAL202058} studied the complexity of (weighted) MHV and MHE based on the density measures of a graph.
	
	A few studies have proposed exact, heuristic, metaheuristic and matheuristic approaches to tackle the (conventional) MHV probem~\cite{Lewis2019265, thiruvady2020, THIRUVADY2022101188}.
	Lewis et al.~\cite{Lewis2019265} proved bounds on the number of happy vertices and introduced a \emph{construct, merge, solve, and adapt} algorithm --- CMSA for short --- for the MHV problem. Another study~\cite{thiruvady2020}, devised a \emph{tabu search} metaheuristic which proves superior to other available algorithms, especially on very large networks. The later study by Lewis et al.~\cite{LEWIS2021105114} explores a variant of the MHV problem, specifically the maximum induced happy subgraph in a network. They proposed some heuristic and metaheuristic algorithms and proved bounds on the maximum number of happy vertices.  Thiruvady and Lewis~\cite{THIRUVADY2022101188} continued this direction of study by developing approaches based on tabu search and evolutionary algorithms hybrid CMSA-tabu search --- CMSA-TS for short --- and showed that CMSA-TS is almost always the most effective approach for solving the MHV problem among the methods they tried. More recently, Ghirardi and Salassa~\cite{Ghirardi2022-lb} presented an algorithm to improve the quality of a solution given by any another algorithm. Zhao, Yan and Zhang~\cite{ZHAO2023113846} considered the relation of $k$-MHV and $k$-MHE with the \emph{maximum $k$-uncut problem} and proposed an algorithm with the time complexity of $\mathcal{O}(kn^2)$.
	
	Finding a connection between community structure and soft happy colouring is straightforward. If the communities are non-overlapping (i.e., communities partition the vertex set), we can express them by colour classes, which is the colouring \emph{induced by the graph's communities}. Consequently, this paper is about how communities of a graph can be considered as the $\rho$-happy colour classes. While the conventional happy colouring (i.e. $\rho=1$) with more than one colour can never be achieved for a connected graph~\cite{Lewis2019265}, we show that communities can make the entire graph $\rho$-happy when $0<\rho<1$. Figure~\ref{fig:community} illustrates this for a graph having 3 communities. As we see, here the communities in the illustrated graph entirely match the $0.5$-happy colour classes. 
	On the other hand, if a complete $\rho$-happy colouring is known for a graph $G$, then one may take its colour classes as an indication of the graph's communities.
	
	\begin{figure}
		\begin{center}
			\begin{tikzpicture}  [scale=0.8]
				
				\tikzstyle{every path}=[line width=1pt]
				
				\newdimen\ms
				\ms=0.1cm
				\tikzstyle{s1}=[color=black,fill,rectangle,inner sep=3]
				\tikzstyle{c1}=[color=black,fill,circle,inner sep={\ms/8},minimum size=2*\ms]

				\coordinate (a1) at  (-5,3.5);
				\coordinate (a0) at  (-4,2.5);
				\coordinate (a3) at (-4,4.5);
				\coordinate (a2) at (-3,3.5);
				\coordinate (a4) at  (4,4);
				\coordinate (a5) at (5,4);
				\coordinate (a6) at (5,3);
				\coordinate (a7) at  (3,3.5);
				\coordinate (a8) at (4.5,2);
				\coordinate (a9) at (1,-2);
				\coordinate (a10) at (1,-3);
				\coordinate (a11) at (-1,-3);
				\coordinate (a12) at (-1,-2);
				\coordinate (a13) at (0,-1);

				% draw edges
				
				\draw [color=black] (a0) -- (a1);
				\draw [color=black] (a0) -- (a2);
				\draw [color=black] (a1) -- (a2);
				\draw [color=black] (a1) -- (a3);
				\draw [color=black] (a2) -- (a3);
				
				\draw [color=black] (a4) -- (a5);
				\draw [color=black] (a4) -- (a6);
				\draw [color=black] (a4) -- (a7);
				\draw [color=black] (a4) -- (a8);
				\draw [color=black] (a5) -- (a6);
				\draw [color=black] (a6) -- (a7);
				\draw [color=black] (a6) -- (a8);
				
				\draw [color=black] (a9) -- (a10);
				\draw [color=black] (a9) -- (a11);
				\draw [color=black] (a9) -- (a13);
				\draw [color=black] (a10) -- (a11);
				\draw [color=black] (a10) -- (a12);
				\draw [color=black] (a11) -- (a12);
				\draw [color=black] (a12) -- (a13);
				
				\draw [color=black, dashed] (a0) -- (a11);
				\draw [color=black, dashed] (a2) -- (a7);
				\draw [color=black, dashed] (a3) -- (a4);
				\draw [color=black, dashed] (a4) -- (a13);
				\draw [color=black, dashed] (a2) -- (a12);
				\draw [color=black, dashed] (a5) -- (a9);
				\draw [color=black, dashed] (a8) -- (a10);

				% draw vertices
				\draw (a0) coordinate[c1, color=red];
				\draw (a1) coordinate[c1, color=red];
				\draw (a2) coordinate[c1, color=red];
				\draw (a3) coordinate[c1, color=red];
				\draw (a4) coordinate[c1, color=blue];
				\draw (a5) coordinate[c1, color=blue];
				\draw (a6) coordinate[c1, color=blue];
				\draw (a7) coordinate[c1, color=blue];
				\draw (a8) coordinate[c1, color=blue];
				\draw (a9) coordinate[c1, color=green];
				\draw (a10) coordinate[c1, color=green];
				\draw (a11) coordinate[c1, color=green];
				\draw (a12) coordinate[c1, color=green];
				\draw (a13) coordinate[c1, color=green];

			\end{tikzpicture}
			
		\end{center}
		\caption{\label{fig:community}
			A graph with 3 communities, represented by three colour classes. Dashed lines represent inter-community edges. Here the colouring is $0.5$-happy. }
	\end{figure}
	
	Section~\ref{sec:nomen} presents the required background and terminology for the theoretical contribution of this paper. Specifically, we review the stochastic block model, one of the simplest random graph models for graphs with community structure. It should also be emphasised that although we build our theorems and experimental tests based on this model, the relation between community structure and soft happy colouring can be applied to any other community structure model.
	
	The theoretical contribution of this paper is presented in Section~\ref{sec:soft_SBM}. Theorem~\ref{th:SBM-happy} gives a sufficient condition on the model parameters of a graph $G$ in the SBM so that the probability of its communities inducing a $\rho$-happy colouring is greater than a function. This enables us to find a threshold $0\leq\xi\leq 1$ such that, with high probability, communities of $G$ induce a $\rho$-happy colouring for $\rho\leq \xi$. Moreover, the asymptotic property of the induced colouring is detected in Theorem~\ref{th:infinity} as the probability of $\rho$-happines of this colouring approaches 1 when the number of vertices tends to infinity. The validity of these results is also discussed by experimental analysis in Section~\ref{sec:soft_SBM} as well as in Section \ref{Sec:exprimental}.
	
	In Section~\ref{sec:alg}, four heuristic algorithms are discussed for finding $\rho$-happy colouring. One of them, \hyperref[greedy3]{\sf Local Maximal Colouring}, is new and has shown a high correlation with the graphs' community structure. We report experimental results considering communities inducing colour classes in Section~\ref{Sec:exprimental} and carry out experiments on four algorithms for several randomly generated graphs.

	\section{Nomenclature}\label{sec:nomen}
	In this paper we use ``graphs" and ``networks" interchangeably, meaning simple unweighted finite graphs. Usually, a graph is denoted by its vertex set 
	and its edge set, i.e. $G=\left( V(G), E(G)\right)$. When two vertices $u$ and $v$ are adjacent, we write $u\sim v$ or $uv\in E(G)$. For $k\in\mathbb{N}$, a $k$-colouring of a graph $G$ is a function $c:V(G)\longrightarrow \{1,\ldots, k\}$. It is well known that a $k$-colouring partitions the vertex set into $k$ parts $V_1, \ldots, V_k$ while conversely, every $k$-partition of the vertex set induces a $k$-colouring which is unique up to permutation of colours. So, when colour orderings are unimportant for us, we write $c=\{V_1, \ldots, V_k\}$. A partial $k$-colouring is a function $c:S\longrightarrow \{1,\ldots, k\}$ where $S\subset V(G)$. By assigning another colour, say $k+1$, to the uncoloured vertices, every partial $k$-colouring becomes a $(k+1)$-colouring.
	
	For a vertex $v$, the set of its neighbours is denoted by $N(v)$. 
	The degree of a vertex $v$, denoted by $\deg(v)$, is the number of edges incident on it, i.e. $\deg(v)=|N(v)|$. The minimum and maximum degrees are respectively denoted by $\delta$ and $\Delta$. When we speak about a vertex $v$ in a set, a colour class, or a community of vertices, we define the degree of $v$ inside that set, which we denote here by $\deg_{in} (v)$. Further, standard graph theoretical notations and definitions can be found in~\cite{diestel2017} by Diestel. 
	
	\subsection{Happy colouring}\label{sec:classifying}
	
	We now formally define a $\rho$-happy colouring of a graph. It should be said that these definitions have their origins in the very first paper on the subject of happy colourings~\cite{ZHANG2015117}. 
	
	\begin{definition}
		\label{def:rho-happy}
		Let $G$ be a graph, $0\leq \rho\leq1$ and $c:V(G)\longrightarrow \{1,\ldots,k\}$ for $k\in\mathbb{N}$ be a vertex colouring. A vertex $v\in V(G)$ is called \emph{$\rho$-happy} if at least $\rho\cdot\deg(v)$ of its neighbours have the same colour as $c(v)$.  
		The colouring $c$ is called $\rho$-happy if every vertex of $G$ is $\rho$-happy.
	\end{definition}

	Immediately, we can see that 1-happy colouring is the conventional happy colouring, while every vertex colouring is a $0$-happy colouring. For practical purposes, we may assume $\rho\neq 0$. Moreover, because the number of vertices having the same colour 
	in any neighbourhood is integral, a vertex $v$ being $\rho$-happy is equivalent to having at least $\lceil \rho\cdot\deg (v) \rceil$ same-colour neighbours, see Figure~\ref{fig:rho-happy}.

	\begin{figure}
		\begin{center}
			\begin{tabular}{ c c c }
				\begin{tikzpicture}  [scale=2]
					
					\tikzstyle{every path}=[line width=1pt]
					
					\newdimen\ms
					\ms=0.1cm
					\tikzstyle{s1}=[color=black,fill,rectangle,inner sep=3]
					\tikzstyle{c1}=[color=black,fill,circle,inner sep={\ms/8},minimum size=2*\ms]

					\coordinate (a0) at  (0,0);
					\coordinate (a1) at  (1,0);
					\coordinate (a2) at (0.7660,0.6427);
					\coordinate (a3) at (0.1736,0.9848);
					\coordinate (a4) at  (-0.4999,0.8660);
					\coordinate (a5) at (-0.9397,0.3421);
					\coordinate (a6) at (-0.9397,-0.3419);
					\coordinate (a8) at (0.1735,-0.9848);
					\coordinate (a9) at (0.7659,-0.6429);

					\draw [color=black] (a0) -- (a1);
					\draw [color=black] (a0) -- (a3);
					\draw [color=black] (a0) -- (a4);
					\draw [color=black] (a0) -- (a5);
					\draw [color=black] (a0) -- (a2);
					\draw [color=black] (a5) -- (a0);
					\draw [color=black] (a0) -- (a6);
					\draw [color=black] (a0) -- (a8);
					\draw [color=black] (a0) -- (a9);

					\draw (a1) coordinate[c1, color=red];
					\draw (a2) coordinate[c1];
					\draw (a3) coordinate[c1];
					\draw (a4) coordinate[c1];
					\draw (a5) coordinate[c1, color=red];
					\draw (a6) coordinate[c1];
					\draw (a8) coordinate[c1, color=blue];
					\draw (a9) coordinate[c1, color=red];
					\draw (a0) coordinate[c1, label=below left:$u$];

				\end{tikzpicture}
				& \hspace{6mm} &
				
				\begin{tikzpicture}  [scale=2]
					
					\tikzstyle{every path}=[line width=1pt]
					
					\newdimen\ms
					\ms=0.1cm
					\tikzstyle{s1}=[color=black,fill,rectangle,inner sep=3]
					\tikzstyle{c1}=[color=black,fill,circle,inner sep={\ms/8},minimum size=2*\ms]

					\coordinate (a0) at  (0,0);
					\coordinate (a1) at  (1,0);
					\coordinate (a2) at (0.7660,0.6427);
					\coordinate (a3) at (0.1736,0.9848);
					\coordinate (a4) at  (-0.4999,0.8660);
					\coordinate (a5) at (-0.9397,0.3421);
					\coordinate (a6) at (-0.9397,-0.3419);
					\coordinate (a7) at  (-0.5001,-0.8660);
					\coordinate (a8) at (0.1735,-0.9848);
					\coordinate (a9) at (0.7659,-0.6429);

					\draw [color=black] (a0) -- (a1);
					\draw [color=black] (a0) -- (a3);
					\draw [color=black] (a0) -- (a4);
					\draw [color=black] (a0) -- (a5);
					\draw [color=black] (a0) -- (a2);
					\draw [color=black] (a5) -- (a0);
					\draw [color=black] (a0) -- (a7);
					\draw [color=black] (a0) -- (a6);
					\draw [color=black] (a0) -- (a8);
					\draw [color=black] (a0) -- (a9);

					\draw (a1) coordinate[c1, color=red];
					\draw (a2) coordinate[c1];
					\draw (a3) coordinate[c1];
					\draw (a4) coordinate[c1];
					\draw (a5) coordinate[c1, color=red];
					\draw (a6) coordinate[c1];
					\draw (a7) coordinate[c1, color=green];
					\draw (a8) coordinate[c1, color=blue];
					\draw (a9) coordinate[c1, color=red];
					\draw (a0) coordinate[c1, label=below left:$v$];

				\end{tikzpicture}
				\\
				
				(a)& &(b)
			\end{tabular}
		\end{center}
		\caption{\label{fig:rho-happy} An example of a $\rho$-happy colouring problem using $\rho=0.5$. (a) The vertex $u$ is $0.5$-happy because its degree is 8 and at least $\lceil 0.5\times 8\rceil=4$ of its neighbours have the same colour as its colour (black). (b) The vertex $v$ is not $0.5$-happy because its degree is 9 and $\lceil 0.5\times 9\rceil= 5$, but only $4$ of its neighbours have the same colour as its colour (black).}
	\end{figure}

	\subsection{Stochastic block model}\label{sec:SBM}
	For modelling general networks on $n$ nodes, random graphs are often used. They are probability spaces consisting of $n$-vertex graphs. See~\cite{Bollobas_2001} by Bollob\'{a}s for further details on random graphs. There are several random graph models such as the \emph{Erd\H{o}s-R\'{e}nyi model}~\cite{erdos59a}, the \emph{hierarchical network model}~\cite{Ravasz-PhysRevE.67.026112}, the \emph{Watts–Strogatz model}~\cite{Watts1998}, and the \emph{stochastic block model} (SBM)~\cite{JERRUM1998155}. It should be noted that graphs with high probability in the Erd\H{o}s-R\'{e}ny model seldom have community structure. One reason is that in this model, the expected number of edges between any pair of vertices are equal. Consequently, this model is not suitable for studying graphs with community structure. In contrast, other mentioned models are designed so that their graphs, whose probabilities in their spaces are the highest, possess community structures. 
	
	One of the most popular random graph models for generating and modelling networks with community structure is the \emph{Stochastic block model} which is introduced by Holland, Laskey, and Leinhardt~\cite{HOLLAND1983109}. The model is also known as the \emph{planted partition model}~\cite{10.1007/978-3-540-48413-4_23, Condon2001}, and is considered in numerous research papers~\cite{Abbe2018Recent, Abbe2016, Abbe2015, JERRUM1998155, Lee2019, PhysRevLett.108.188701, Zhang_2012}.  \
	The model itself has also been subject to further generalisation such as the \emph{degree-corrected stochastic block model}~\cite{Karrer-PhysRevE.83.016107} and the \emph{Ball et al. model}~\cite{Ball-PhysRevE.84.036103}. In this paper, we limit ourselves to the conventional form of the stochastic block model for our theoretical results (see Section~\ref{sec:soft_SBM}) and experimental tests (see Section~\ref{Sec:exprimental}). We now summarise the main concepts and terminology.
	
	In the simplest form of the SBM, denoted by $\mathcal{G}(n,k,p,q)$, a graph has $n$ vertices and $k$ communities. Each vertex belongs to exactly one of the $k$ communities, hence the communities are not overlapping. Moreover, communities are also assumed to be of nearly the same size $\approx\frac{n}{k}$, where if $n$ is not divisible by $k$, some communities have only one additional vertex compared to the others. 
 
In the general case of the SBM, the probability of the existence of an edge $uv$ depends on community membership of $u$ and $v$. If $C_1, \ldots, C_k$ are the communities and $u\in C_i$ while $v\in C_j$ for $i,j\in \{1,\ldots, k\}$, then the probability of $uv\in E(G)$ is $P_{ij}$. In the simplest form of $\mathcal{G}(n,k,p,q)$, however, each $P_{ij}$ is considered to be either $p$ or $q$ for $0<q<p<1$. In other words, we assume 
	\begin{equation}P_{ij}=\begin{cases} p & \text{ if } i=j \\ q & \text{otherwise.} \end{cases}\end{equation}
	As a result, the expected degree of every vertex $v\in C_i$ is 
	\begin{equation}
		\overline{d}=\mathbb{E}[\D(v)]=\left(\frac{n}{k}-1\right)p+\frac{k-1}{k}nq,
	\end{equation}
where $\D(v)$ represents the random variable giving $\deg(v)$ for each $v\in V(G)$. When $n$ is sufficiently large, we can assume $\overline{d}\simeq \frac{n}{k}p+\frac{k-1}{k}nq$. 
	
	It is evident that when $p$ and $q$ are very close to each other, the SBM becomes indistinguishable from the Erd\H{o}s-R\'{e}ny model and we might be unable to identify any community structure in the graph. For $k=2$, Decelle et al.~\cite{PhysRevE.84.066106} have also
	conjectured that detecting communities in large networks in the SBM is possible if and only if $(a -b)^2 > 2(a +b)$ for $a=np$ and $b=nq$. It is shown afterwards that when $(a-b)^2 > C(a+b)$ for a large enough positive integer $C$, %\AN{more info on C} 
	the communities can be revealed by the spectral clustering algorithm~\cite{Coja-Oghlan}. The conjecture was finally proven in~\cite{Mossel2015}. For the general case (when $k\geq 2$), there is a --- distinguishability of community structure --- phase transition result worth noting~\cite{Abbe2015}. In any case, the larger the value for $\frac{p}{q}$, the more distinguishable the community structure. Therefore, we assume, especially in our tests, that $\frac{p}{q}$ is large enough so that the communities are distinguishable.
	Further information about recent developments on the SBM can be found in~\cite{Abbe2018Recent} by Abbe.

	\section{Soft happy colouring of graphs in the SBM}\label{sec:soft_SBM}
	
	For any pair of vertices $u$ and $v$ of a graph $G$ in the SBM, the adjacency probability is known. If $u$ and $v$ belong to the same community, the probability of $u$ being adjacent to $v$ is $p$ while if they belong to different communities, they are adjacent with the probability $q$. Using this fact, one might like to know that in the induced colouring by communities, what the probability $\rho$-happiness of a vertex is. In other words, given an $0<\varepsilon <1$, we are interested in finding a condition that the probability of $\rho$-happiness of a vertex is at least $1-\varepsilon$. Knowing this condition enables us to increase the probability of $\rho$-happiness of the induced colouring by the communities. The following theorem gives this condition.
	
	\begin{theorem}\label{th:SBM-happy}
		Suppose that $G$ is a graph modelled by the SBM, i.e., $G\in\mathcal{G}(n,k,p,q)$, $n$ is a sufficiently large integer, $2\leq k$, $0<q<p<1$, and $0<\rho\leq 1$. Then, for $0<\varepsilon<1$, at least with the probability of $(1-\varepsilon)^n$ the communities of $G$ induce a $\rho$-happy colouring on $G$ if %and only if 
		\begin{equation}\label{th_eq}
			q(k-1)(e^\rho -1) +p(e^\rho-e)<\frac{k}{n}\mathrm{ln}(\varepsilon).
		\end{equation}
	\end{theorem}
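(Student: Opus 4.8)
The plan is to prove the statement one vertex at a time and then combine. Fix a vertex $v$ lying in some community $C_i$. Since edges are drawn independently in $\mathcal{G}(n,k,p,q)$, the number $X$ of neighbours of $v$ inside $C_i$ is a binomial variable on $\frac{n}{k}-1$ trials with success probability $p$; the number $Y$ of neighbours of $v$ outside $C_i$ is an independent binomial on $\frac{(k-1)n}{k}$ trials with success probability $q$; and $\deg(v)=X+Y$. In the colouring induced by the communities, $v$ is $\rho$-happy precisely when $X\ge\rho(X+Y)$, i.e. when $(1-\rho)X-\rho Y\ge 0$. So the first step is to bound the failure probability $\Pr[(1-\rho)X-\rho Y<0]$ for a single vertex and to show it is at most $\varepsilon$ under hypothesis \eqref{th_eq}.

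For this I would use an exponential (Chernoff--Markov) bound. Writing the bad event as $\rho Y-(1-\rho)X>0$ and using $\Pr[Z>0]\le\mathbb{E}[e^{Z}]$ with the exponential parameter fixed at $1$ --- this choice is what produces the base-$e$ exponentials seen in \eqref{th_eq} --- the independence of $X$ and $Y$ factorises the bound as $\mathbb{E}[e^{\rho Y}]\,\mathbb{E}[e^{-(1-\rho)X}]$. Each factor is a binomial moment generating function $\mathbb{E}[e^{sB}]=(1+r(e^{s}-1))^{m}$, which I would bound from above via $1+x\le e^{x}$ and then simplify using the large-$n$ assumption $\frac{n}{k}-1\simeq\frac{n}{k}$. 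The cross-community factor becomes $\exp\!\big(\tfrac{(k-1)n}{k}\,q(e^{\rho}-1)\big)$, while the in-community factor contributes the term carrying $p(e^{\rho}-e)$; requiring the product to be at most $\varepsilon$ and taking logarithms then rearranges into the displayed inequality \eqref{th_eq}. The delicate point is the in-community factor: because $\rho\le 1$ we have $e^{\rho}\le e$, so this contribution is nonpositive, and its sign and magnitude must be tracked carefully to arrive at the stated coefficient.

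The last step is to pass from ``each vertex is $\rho$-happy with probability at least $1-\varepsilon$'' to ``all $n$ vertices are simultaneously $\rho$-happy with probability at least $(1-\varepsilon)^{n}$''. This does \emph{not} follow from a union bound, which would only give $1-n\varepsilon$; one genuinely needs positive association of the $n$ happiness events. I would obtain this from the Harris--FKG inequality: the happiness event of $v$ is increasing in the in-community edge indicators and decreasing in the cross-community edge indicators, so replacing each cross-community indicator by its complement turns all $n$ happiness events into increasing functions of one independent family of Bernoulli variables. Harris's inequality then yields $\Pr[\text{all happy}]\ge\prod_{v}\Pr[v\text{ happy}]\ge(1-\varepsilon)^{n}$.

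I expect the main obstacle to lie in the second paragraph --- fixing the exponential parameter and carrying out the moment-generating-function estimates so that the two community contributions collapse exactly onto the left-hand side of \eqref{th_eq}, with the nonpositive term $p(e^{\rho}-e)$ handled in the correct direction. The per-vertex reduction and the association step are comparatively routine once the single-vertex tail bound is secured.
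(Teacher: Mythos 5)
Your overall strategy---a per-vertex exponential moment bound with the parameter fixed at $t=1$, followed by a multiplicative aggregation over the $n$ vertices---is the same as the paper's. But there is a concrete gap exactly where you flag ``the delicate point.'' Factorising the bad event $\rho Y-(1-\rho)X>0$, the in-community factor is
\[
\mathbb{E}\left[e^{-(1-\rho)X}\right]=\left(1-p+p\,e^{\rho-1}\right)^{\frac{n}{k}-1}\leq e^{\frac{n}{k}\,p\left(e^{\rho-1}-1\right)},
\]
so the honest computation yields the coefficient $p(e^{\rho-1}-1)$, not $p(e^{\rho}-e)$. Since $e^{\rho-1}-1=\frac{1}{e}(e^{\rho}-e)$ and both are negative for $\rho<1$, your exponent is strictly larger (less negative) than the one in Inequality~(\ref{th_eq}); hence the theorem's hypothesis does not force your per-vertex failure probability below $\varepsilon$, and you would instead need the stronger condition $q(k-1)(e^{\rho}-1)+p(e^{\rho-1}-1)<\frac{k}{n}\ln(\varepsilon)$. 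No choice of exponential parameter in your (correct) factorisation produces $e^{\rho}-e$: the paper arrives at that coefficient by replacing $\mathbb{E}\bigl[e^{t(\rho-1)X}\bigr]$ with the ratio $\mathbb{E}\bigl[e^{t\rho X}\bigr]/\mathbb{E}\bigl[e^{tX}\bigr]$, which is not a valid identity for a non-degenerate $X$. So your route, carried through rigorously, proves the statement only with a different constant, and the step you hoped would ``collapse exactly'' onto (\ref{th_eq}) cannot be completed as proposed.

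On the aggregation step you genuinely depart from, and improve on, the paper. The paper claims the events $\{\rho\D(v)-\D_{in}(v)\leq 0\}$ are independent across vertices because the edge indicators $A_{uv}$ are, but $\D(u)$ and $\D(v)$ share the indicator $A_{uv}$, so this independence fails. Your Harris--FKG argument---flip the cross-community indicators so that every happiness event becomes an increasing function of a single independent Bernoulli family, then multiply the probabilities---is a correct way to obtain $\Prob(G\in H_\rho)\geq(1-\varepsilon)^n$ from the per-vertex bounds, and it repairs the weakest step of the paper's own proof.
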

	\begin{proof}
		Let the communities of $G$ be $C_1,\ldots, C_k$ and let $v$ be an arbitrary vertex of the $j^\text{th}$ community of $G$. For $i\in\{1,\ldots, k\}$, let $X_i$ be the random variable which gives the number of adjacent vertices to $v$ in $C_i$. In other words,
		\begin{equation}
			X_i=\sum_{u\in C_i} A_{uv},
		\end{equation}
		where $A_{uv}\in\{0,1\}$ are independent random variables denoting the number of edges between $u$ and $v$. Then $X_i$ have a binomial distribution and $$\mathbb{E}[X_i]=\begin{cases}
			p\frac{n}{k}& \text{ if } i=j\\
			q\frac{n}{k}& \text{ otherwise.}
		\end{cases}$$
		For an arbitrary $t>0$, we can now easily find the moment generating function $$\mathbb{E}\left[ e^{t\cdot X_i}\right] =\begin{cases}
			\left( 1-p+p\cdot e^t\right)^\frac{n}{k}&\text{ if } i=j\\
			\left( 1-q+q\cdot e^t\right)^\frac{n}{k}&\text{ otherwise.}
		\end{cases}$$
		Note that $$\D(v)=\sum_{i=1}^k X_i$$ while $$\D_{in} (v)=X_j$$ are random variables giving $\deg(v)$ and $\deg_{in} (v)$, respectively.
		
		Now, by the Chernoff bound, for $t>0$ we have $$\Prob(\rho\D(v)-\D_{in}(v)>0)\leq \mathbb{E}\left[e^{t(\rho \D(v)-\D_{in}(v))}\right].$$
		Therefore, $$\Prob(\rho\D(v)-\D_{in}(v)>0)\leq   \left( 1-p+p\cdot e^{t\rho}\right)^\frac{n}{k}\cdot  \left( 1-p+p\cdot e^t\right)^{-\frac{n}{k}}\cdot \prod_{i=1}^{k-1}\left( 1-q+q\cdot e^{t\rho}\right)^\frac{n}{k},$$
		which means that \[\begin{split}
			\Prob(\rho\D(v)-\D_{in}(v)>0)&\leq   \left( 1-p+p\cdot e^{t\rho}\right)^\frac{n}{k}\cdot  \left( 1-p+p\cdot e^t\right)^{-\frac{n}{k}}\cdot \left( 1-q+q\cdot e^{t\rho}\right)^{\frac{k-1}{k}n}\\
			& \leq e^{\mathrm{ln}\left( \left( 1-p+p\cdot e^{t\rho}\right)^\frac{n}{k}\cdot  \left( 1-p+p\cdot e^t\right)^{-\frac{n}{k}}\cdot \left( 1-q+q\cdot e^{t\rho}\right)^{\frac{k-1}{k}n}\right) }\\
			&\leq e^{\frac{n}{k}\cdot\left( \mathrm{ln}\left(1-p+p\cdot e^{t\rho}\right)-\mathrm{ln}\left(1-p+p\cdot e^{t}\right)+(k-1)\cdot\mathrm{ln}\left(1-q+q\cdot e^{t\rho}\right)\right)}.
		\end{split}\]
		Since $e^x\geq 1+x$ for $x\in\mathbb{R}$, we can deduce that $$1-p+p\cdot e^{t\rho}=1+p\cdot\left(e^{t\rho}-1\right)\leq e^{p\cdot\left(e^{t\rho}-1\right)}.$$ Consequently, using similar inequalities, we have
		\[\begin{split}
			\Prob(\rho\D(v)-\D_{in}(v)>0)&\leq e^{\frac{n}{k}\cdot\left( p\cdot \left(e^{t\rho}-1\right)-p\cdot\left(e^{t}-1\right)+(k-1)q\cdot\left(e^{t\rho}-1\right)\right)}\\
			&\leq e^{\frac{n}{k}\cdot\left( p\cdot \left(e^{t\rho}-e^t\right)+(k-1)q\cdot\left(e^{t\rho}-1\right)\right)}.
		\end{split}\]
		Hence, for $\Prob(\rho\D(v)-\D_{in}(v)>0)<\varepsilon$, it is sufficient to have 
		\begin{equation}\label{epsilon}
			e^{\frac{n}{k}\cdot\left( p\cdot \left(e^{t\rho}-e^t\right)+(k-1)q\cdot\left(e^{t\rho}-1\right)\right)}<\varepsilon,
		\end{equation}
		or $$p\cdot \left(e^{t\rho}-e^t\right)+(k-1)q\cdot\left(e^{t\rho}-1\right)<\frac{k}{n}\mathrm{ln}(\varepsilon).$$ 
		The inequality also holds when $t=1$, resulting in the inequality we wanted to prove. 
		
		On the other hand, when Inequality~(\ref{th_eq}) holds, the probability of $v$ being $\rho$-unhappy is less than $\varepsilon$, and therefore, with the probability $1-\varepsilon$, the communities of graph $G$ induce a colouring on vertices of $G$ in which $v$ is $\rho$-happy. Since the values of $A_{uv}$ are independent for all pairs $u,v\in V(G)$, it can be inferred that $\D(v)$ and $\D_{in}(v)$ are independent from $\D(u)$ and $\D_{in}(u)$, respectively. 
		Consequently, $\rho\D(v)-\D_{in}(v)$ and $\rho\D(u)-\D_{in}(u)$ are independent too, which means that when Inequality~(\ref{th_eq}) holds, with the probability of at least $(1-\varepsilon)^n$, the communities of $G$ induce a $\rho$-happy colouring on $G$.
	\end{proof}
	
	We now need additional notation to consider the implications of Theorem~\ref{th:SBM-happy}. Suppose that $G$ is a graph modelled by the SBM, $v$ is a vertex, $0\leq \rho\leq 1$, and $c$ is the vertex colouring induced by the communities of $G$. 
	If $c$ makes $v$ $\rho$-happy, we write $v\in H_\rho$ and, if it makes the entire graph $\rho$-happy, we write $G\in H_\rho$. By Theorem~\ref{th:SBM-happy}, we know that $\Prob (v\in H_\rho )\geq 1-\varepsilon$, while $\Prob (G\in H_\rho )\geq (1-\varepsilon)^n$. Since this is dependent on the choice of $\varepsilon$, using Equation~(\ref{epsilon}) we define
	\begin{equation}\label{epsilon_bar}
		\tilde{\varepsilon}= e^{\frac{n}{k}\cdot\left( p\cdot \left(e^{\rho}-e\right)+q(k-1)\cdot\left(e^{\rho}-1\right)\right)}.
	\end{equation}
	Therefore, it can be seen that $\Prob (v\in H_\rho )\geq 1-\tilde{\varepsilon}$ and $\Prob (G\in H_\rho )\geq (1-\tilde{\varepsilon})^n$, where $\tilde{\varepsilon}$ can be determined by $\rho$ and the model's parameters, as per Equation~(\ref{epsilon_bar}). 
	
	For a graph $G$ in the SBM, i.e. $G\in\mathcal{G}(n,k,p,q)$, we can expect to find a $\rho$ such that the communities of $G$ present $\rho$-happy colour-classes for $G$. This is not always possible in the general case. For instance, the star graph $K_{1,m}$ admits no $\rho$-happy colouring for $0<\rho\leq 1$ with more than one colour. However, for a graph in the SBM, we can find a $\xi$ such that for $0<\rho\leq \xi$, the graph's communities induce a colouring that makes almost all vertices $\rho$-happy. Of course, to have a high probability for happiness of an arbitrary vertex $v$ in the colouring induced by communities of $G$, we must have
	\[
	\mathbb{E}\left[ \rho\D (v) -\D_{in} (v)\right] =
	\frac{n}{k}\left(\rho (p + (k-1)q) - p\right)<0,
	\]
	because otherwise, more than half of all vertices are 
	$\rho$-unhappy. Therefore,
	\[\rho\left(p+(k-1)q\right) < p.\]
	Consequently, we  have
	\begin{equation}\label{eq:rho-xi}
		\xi\leq\frac{p}{p+(k-1)q}.
	\end{equation}
	The following theorem gives another upper bound for $\xi$. 
	
	\begin{theorem}
		Given $n$, $k$, $p$, and $q$, there exist a $0\leq \xi\leq 1$ such that, for $\rho\leq\xi$, with high probability the communities of $G\in\mathcal{G}(n,k,p,q)$ induce a $\rho$-happy colouring.
	\end{theorem}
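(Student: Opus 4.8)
The plan is to read the threshold $\xi$ directly off the exponent of $\tilde{\varepsilon}$ in Equation~(\ref{epsilon_bar}). Writing
\[
\Phi(\rho)=p\left(e^{\rho}-e\right)+q(k-1)\left(e^{\rho}-1\right),
\]
we have $\tilde{\varepsilon}=e^{\frac{n}{k}\Phi(\rho)}$, and Theorem~\ref{th:SBM-happy} supplies $\Prob(G\in H_\rho)\geq(1-\tilde{\varepsilon})^n$. So the whole question reduces to locating the values of $\rho$ for which $\Phi(\rho)<0$: only then is $\tilde{\varepsilon}<1$, and only then can $(1-\tilde{\varepsilon})^n$ be pushed close to $1$.

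Next I would study $\Phi$ as a one-variable function on $[0,1]$. Its derivative $\Phi'(\rho)=\bigl(p+q(k-1)\bigr)e^{\rho}$ is strictly positive, so $\Phi$ is continuous and strictly increasing. At the endpoints, $\Phi(0)=p(1-e)<0$ and $\Phi(1)=q(k-1)(e-1)>0$ (using $q>0$, $k\geq2$, $e>1$). By the intermediate value theorem there is therefore a unique $\xi_0\in(0,1)$ with $\Phi(\xi_0)=0$, and $\Phi(\rho)<0$ precisely for $\rho<\xi_0$. I would take this $\xi_0$ (or any smaller value) as the threshold $\xi$; it is exactly the ``other'' upper bound promised in the discussion preceding the statement, to be used alongside the elementary bound $\frac{p}{p+(k-1)q}$ of Equation~(\ref{eq:rho-xi}).

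It then remains to upgrade $\tilde{\varepsilon}<1$ into a high-probability statement. For fixed parameters and any $\rho<\xi_0$, the quantity $\Phi(\rho)$ is a negative constant, so $\tilde{\varepsilon}=e^{\frac{n}{k}\Phi(\rho)}$ decays exponentially in $n$. Combining Theorem~\ref{th:SBM-happy} with Bernoulli's inequality gives
\[
\Prob(G\in H_\rho)\geq(1-\tilde{\varepsilon})^n\geq 1-n\tilde{\varepsilon}=1-n\,e^{\frac{n}{k}\Phi(\rho)},
\]
and since the exponentially small factor dominates the polynomial factor $n$, the right-hand side tends to $1$. Hence for every $\rho\leq\xi$ the communities induce a $\rho$-happy colouring with high probability.

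The delicate point is the boundary $\rho=\xi_0$. There $\Phi(\xi_0)=0$ forces $\tilde{\varepsilon}=1$ and the bound $(1-\tilde{\varepsilon})^n$ collapses to $0$, so the argument above genuinely requires $\Phi(\rho)$ to be bounded away from $0$. This is why strictness matters: I would either declare $\xi$ to be a fixed value strictly smaller than $\xi_0$, or retain $\xi=\xi_0$ and read the conclusion only for $\rho<\xi$. A second thing to pin down is the precise meaning of ``with high probability'': since $p,q,k$ (and hence $\Phi$ and $\xi_0$) are held fixed while $n$ grows, the statement is asymptotic in $n\to\infty$, which is exactly the regime in which $n\,e^{\frac{n}{k}\Phi(\rho)}\to0$.
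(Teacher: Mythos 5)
Your argument is correct, and it reaches the same destination as the paper by a noticeably different route. The paper's proof fixes a small failure tolerance first (taking $\varepsilon\leq\tfrac{1}{n^2}$ so that $(1-\varepsilon)^n\geq 1-n\varepsilon\geq 1-\tfrac{1}{n}$), then solves Inequality~(\ref{th_eq}) for $\rho$ to obtain the explicit, $n$- and $\varepsilon$-dependent threshold $\ln\bigl(\frac{\frac{k}{n}\ln(\varepsilon)+pe+(k-1)q}{p+(k-1)q}\bigr)$, which it finally clips by the expectation bound $\frac{p}{p+(k-1)q}$ of Equation~(\ref{eq:rho-xi}) and by $0$ to form Equation~(\ref{xi}). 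You instead analyse the exponent $\Phi$ of $\tilde{\varepsilon}$ in Equation~(\ref{epsilon_bar}) directly: monotonicity plus the sign change $\Phi(0)<0<\Phi(1)$ give a unique root $\xi_0=\ln\bigl(\frac{pe+(k-1)q}{p+(k-1)q}\bigr)$, and below it $n\tilde{\varepsilon}$ decays exponentially. That $\xi_0$ is precisely the first term of the limiting threshold $\tilde{\xi}$ in Theorem~\ref{th:infinity}, so in effect you have fused the proofs of this theorem and of Theorem~\ref{th:infinity} into one asymptotic argument; what you gain is the sharp (largest possible) threshold and a cleaner exponential failure bound, and you correctly flag the degeneracy at $\rho=\xi_0$. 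What the paper's formulation buys in exchange is a guarantee that is meaningful for the \emph{given, finite} $n$ of the statement: since the theorem fixes $n$, ``with high probability'' cannot be read purely as $n\to\infty$, and the paper's $\frac{k}{n}\ln(\varepsilon)$ correction is exactly the finite-$n$ back-off from your $\xi_0$ that keeps $\Phi(\rho)$ bounded away from zero by $\frac{k}{n}\ln(\varepsilon)$ and hence keeps the probability above $1-\tfrac{1}{n}$. If you quantify your remark ``a fixed value strictly smaller than $\xi_0$'' by requiring $\Phi(\xi)\leq\frac{k}{n}\ln(\varepsilon)$ for your chosen $\varepsilon$, the two proofs coincide; the paper's extra $\min$ with $\frac{p}{p+(k-1)q}$ and $\max$ with $0$ are safeguards (the latter because the argument of the logarithm can be small or the logarithm negative for tiny $\varepsilon$) rather than logical necessities, so omitting them does not harm your existence claim.
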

	\begin{proof}
		We need a $0<\rho\leq 1$ so that Equation~\ref{th_eq} holds. Then, by Theorem~\ref{th:SBM-happy}, the probability of the existence of a $\rho$-happy colouring of $G$ is at least $1-n\varepsilon\leq (1-\varepsilon)^n$, which will be close to 1 if $\varepsilon$ is small enough. Hence, we choose a small $\varepsilon>0$ for example $\varepsilon\leq\frac{1}{n^2}$. 
		
		We can find a bound on $\rho$ based on Equation~\ref{th_eq} as follows. Since we want  $$p\cdot \left(e^{\rho}-e\right)+(k-1)q\cdot\left(e^{\rho}-1\right)<\frac{k}{n}\mathrm{ln}(\varepsilon),$$ we can separate terms that contain $e^\rho$ from the left-hand side of the above inequality to have
		$$e^{\rho}\cdot \left(p+(k-1)q\right)<\frac{k}{n}\mathrm{ln}(\varepsilon)+p e +(k-1)q.$$ 
		Therefore, $$e^{\rho} <\frac{\frac{k}{n}\mathrm{ln}(\varepsilon)+p e +(k-1)q}{p+(k-1)q},$$ which means that
		\begin{equation}\label{eq:rho}
			\rho <\mathrm{ln}\left(\frac{\frac{k}{n}\mathrm{ln}(\varepsilon)+p e +(k-1)q}{p+(k-1)q}\right).  
		\end{equation}
		
		Using Equations~(\ref{eq:rho-xi}) and~(\ref{eq:rho}), we now get
		\begin{equation}\label{xi}
			\xi=\max\left\{\min\left\{\mathrm{ln}\left(\frac{\frac{k}{n}\mathrm{ln}(\varepsilon)+p e +(k-1)q}{p+(k-1)q}\right),\; \frac{p}{p+(k-1)q}\right\}, \; 0\right\},
		\end{equation}
		which has the desired property.
	\end{proof}
	
	Therefore, $\xi$ is a threshold so that communities of the graph induce a $\rho$-happy colouring for $0\leq\rho\leq\xi$. Note that we did not exclude the $0$-happy colouring in Definition~\ref{def:rho-happy} since $\xi$ can sometimes be 0 and therefore $0$-happy colouring must be defined. To increase the probability that communities of the graph induce a $\rho$-happy colouring, we may consider $\rho\leq\frac{\xi}{2}$.

	The threshold $\xi$ (Equation~(\ref{xi})) has shown to be useful in our experiments. We investigated whether $\xi$ is a meaningful threshold for communities to induce $\rho$-happy colour classes, and for this purpose, we examined 100,000 graphs in the SBM with 500, 1,000, 2,000, 3,000 and 5,000 vertices with randomly chosen $k\in\{2,3,\ldots, 20\}$, $0<p\leq 1$, $0<q\leq\frac{p}{2}$ and $0<\rho\leq 1$, When $\rho\leq\frac{\xi}{2}$, the number of $\rho$-happy vertices --- in the colouring induced by the graph's communities --- is always high, evidence of this is provided in Figure~\ref{fig:xi-2d}. Moreover, as it can be seen in Figure~\ref{fig:xi-3d}, the minimum number of $\rho$-happy vertices in the same set of graphs are almost always very high when $\xi >\rho$ (yellow dots) while low numbers of $\rho$-happy vertices appear only when $\xi < \rho$ (purple dots). This tells us that $\xi$ presents a useful threshold so that we can expect to have a complete $\rho$-happy colouring when $\rho < \xi$. %For further details of our experiments see Section~\ref{Sec:exprimental}. 

	\begin{figure}
		\centering
		\includegraphics[scale=0.6]{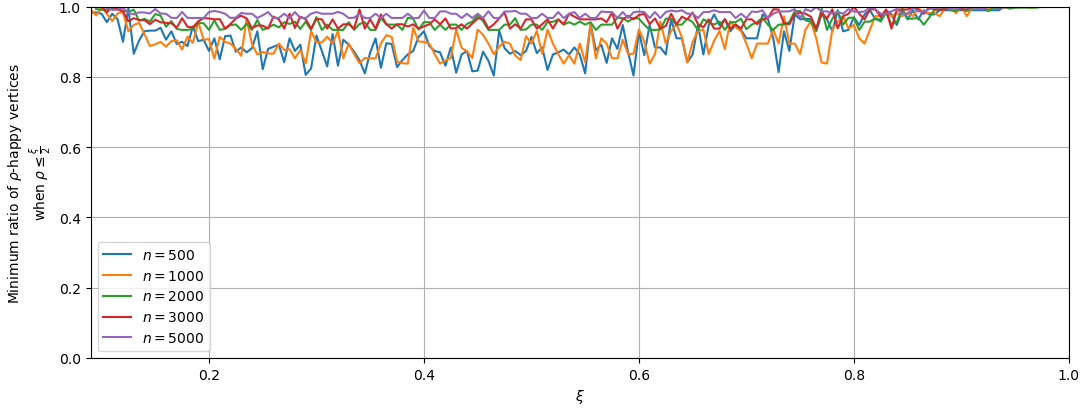}
		\caption{This graph shows the minimum ratio of $\rho$-happy vertices when $\rho\leq\frac{\xi}{2}$ in the induced colouring by graphs communities. Tests were performed over 20,000 randomly generated graphs with $n\in\{$500, 1,000, 2,000, 3,000, 5,000$\}$ (100,000 graphs in total) and parameters randomly selected from $2\leq k\leq 20$, $0<p\leq 1$, and $0<q\leq \frac{p}{2}$.}
		\label{fig:xi-2d}
	\end{figure}
	
	\begin{figure}

    \centering
    \includegraphics[scale=0.9]{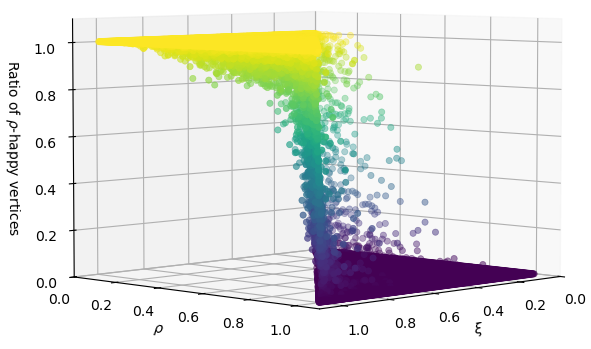}
    \caption{The minimum ratio of $\rho$-happy vertices of the induced colouring by communities of 100,000 randomly generated precoloured graphs with $n\in\{500, 1,000, 2,000, 3,000, 5,000\}$ (20,000 graphs for each of these numbers) while other parameters are randomly chosen from $2\leq k\leq 20$, $0<p\leq 1$, and $0<q\leq \frac{p}{2}$. It can be seen when $\rho<\xi$, the probability of having a large number of $\rho$-happy vertices is high (yellow areas).}
		\label{fig:xi-3d}
	\end{figure}

	Moreover, when we know that the probability of a vertex being $\rho$-happy by the community-induced colouring is at least $1-\varepsilon$, then the minimum probability of such colouring being $\rho$-happy depends not only on $n$ but also on $\varepsilon$. If $n\rightarrow\infty$, then this probability goes to zero when $\epsilon$ remains constant. If $\varepsilon$ decreases harmonically, the probability goes to a positive number between 0 and 1. For example, if $\varepsilon= \frac{1}{n}$, then $(1-\frac{1}{n})^n\rightarrow e^{-1}\simeq 0.36787\textcolor{red}{\ldots}$. Meanwhile, since $n\rightarrow\infty$ results in $(1-\frac{1}{n^t})^n\rightarrow 1$ for $t>1$, the probability approaches 1 when $n$ becomes sufficiently large when $\varepsilon$ drops sharply, for example when $\varepsilon=n^{-2}$. 
	
	To simplify the notations in Equation~(\ref{epsilon}), define
	\begin{equation}
		\varphi=\varphi(k,p,q,\rho)=\frac{1}{k}\cdot \left(p\cdot \left(e^{\rho}-e\right)+q(k-1)\left(e^{\rho}-1\right)\right).
	\end{equation}
	Hence, $$\tilde{\varepsilon}=e^{n\varphi}.$$ It is evident that the lower the value of $n\cdot\varphi$, the higher the following probability 
	\begin{equation}\label{eq:prob_tilde}
		\Prob(G\in H_\rho)\geq (1-\tilde{\varepsilon})^n.
	\end{equation}
	This probability has a direct relationship with $p$ and inverse relationships with $k$, $q$ and $\rho$. Its relation with $n$ is not as straightforward because sometimes increasing the fraction $\frac{n}{k}$ results in a higher probability of a vertex being $\rho$-happy, but reduces the probability that all the vertices remain $\rho$-happy. Our intention here is to consider the asymptotic interpretation of the relation of the number of vertices with $\rho$-happiness of the induced colouring of a graph in the SBM when other parameters remain constant. This is expressed in the following theorem. 
	
	\begin{theorem}\label{th:infinity}
		Let $0<q<p<1$,  $k\in\mathbb{N}\setminus\{1\}$ be constants, 
		and $$\tilde{\xi}=\min\left\{\mathrm{ln}\left(\frac{p e +(k-1)q}{p+(k-1)q}\right),\frac{p}{p+(k-1)q}\right\}.$$ Then, for $0\leq \rho <\tilde{\xi}$ and $G\in\mathcal{G}(n,k,p,q)$, we have $\Prob(G\in H_\rho)\rightarrow 1$ when $n\rightarrow \infty$. In other words, the probability that the communities of $G$ induce a $\rho$-happy colouring on its vertex set approaches 1 when $n$ becomes large enough. 
	\end{theorem}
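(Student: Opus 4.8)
The plan is to reduce everything to the explicit lower bound already established, namely $\Prob(G\in H_\rho)\geq (1-\tilde{\varepsilon})^n$ from Equation~(\ref{eq:prob_tilde}), where $\tilde{\varepsilon}=e^{n\varphi}$ and $\varphi=\frac{1}{k}\left(p(e^{\rho}-e)+q(k-1)(e^{\rho}-1)\right)$. Since $1\geq \Prob(G\in H_\rho)\geq (1-\tilde{\varepsilon})^n$, it suffices by a squeeze to show that $(1-\tilde{\varepsilon})^n\to 1$ as $n\to\infty$; the whole theorem then follows once the sign and decay rate of $\varphi$ are under control.

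First I would show that the hypothesis $\rho<\tilde{\xi}$ forces $\varphi<0$. Since $\tilde{\xi}$ is a minimum, $\rho<\tilde{\xi}$ implies in particular $\rho<\ln\left(\frac{pe+(k-1)q}{p+(k-1)q}\right)$. Exponentiating and rearranging gives $e^{\rho}\left(p+(k-1)q\right)<pe+(k-1)q$, equivalently $p(e^{\rho}-e)+q(k-1)(e^{\rho}-1)<0$, which is exactly $\varphi<0$. Because $k,p,q,\rho$ are all constants independent of $n$, this makes $\varphi$ a fixed negative constant, so I may write $\varphi=-c$ with $c>0$.

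Next I would analyse the limit of $(1-\tilde{\varepsilon})^n=(1-e^{-cn})^n$. Since $c>0$ we have $e^{-cn}\in(0,1)$ for every $n\geq 1$, so $\ln(1-e^{-cn})$ is always well defined, and from $\ln(1-x)\sim -x$ as $x\to 0$ one can bound $\left|n\ln(1-e^{-cn})\right|$ by a constant multiple of $n\,e^{-cn}$ for $n$ large. The crucial estimate is then $n\,e^{-cn}\to 0$ as $n\to\infty$, i.e.\ exponential decay dominates linear growth. This yields $n\ln(1-e^{-cn})\to 0$, hence $(1-\tilde{\varepsilon})^n\to e^{0}=1$, and the squeeze against $\Prob(G\in H_\rho)\leq 1$ completes the argument.

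The step I expect to carry the real content is the algebraic equivalence $\rho<\ln(\cdots)\Leftrightarrow\varphi<0$, since everything downstream hinges on $\varphi$ being strictly negative; the asymptotic part ($n\,e^{-cn}\to 0$) is routine. A minor point to handle carefully is that $(1-\tilde{\varepsilon})^n$ is only a lower bound, so one must invoke $\Prob(G\in H_\rho)\leq 1$ to close the squeeze. I would also remark that the second branch $\frac{p}{p+(k-1)q}$ of the minimum defining $\tilde{\xi}$ plays no role in the limit itself --- it only guarantees $\mathbb{E}\left[\rho\D(v)-\D_{in}(v)\right]<0$ --- so the proof uses only the logarithmic branch of $\tilde{\xi}$.
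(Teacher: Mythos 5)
Your proposal is correct and follows essentially the same route as the paper: both deduce $\varphi<0$ from the logarithmic branch of $\tilde{\xi}$ and then show the lower bound $(1-e^{n\varphi})^n$ tends to $1$ because $n\,e^{n\varphi}\to 0$. The only cosmetic difference is that you control $n\ln(1-e^{n\varphi})$ via $\ln(1-x)\sim -x$, whereas the paper passes through $e^{-n e^{n\varphi}}$ and l'H\^opital's rule; your observation that the second branch of the minimum is not needed for the limit also matches the paper's argument.
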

	\begin{proof}
		For any $0<\varepsilon<1$ we have $$\lim_{n\rightarrow\infty} \mathrm{ln}\left(\frac{\frac{k}{n}\mathrm{ln}(\varepsilon)+p e +(k-1)q}{p+(k-1)q}\right)=\mathrm{ln}\left(\frac{p e +(k-1)q}{p+(k-1)q}\right).$$
		As a result, $\lim_{n\rightarrow\infty} \xi=\tilde{\xi}$. And for $\rho<\tilde{\xi}$, we not only have \linebreak $\mathbb{E}\left[ \rho\D (v) -\D_{in} (v)\right] <0$, but also 
		\[\begin{split}
			& e^\rho <\frac{p e +(k-1)q}{p+(k-1)q}\\
			\implies & e^\rho (p+(k-1)q) - pe -(k-1)q <0\\
			\implies & \varphi(k,p,q,\rho)<0.
		\end{split}\]
		Therefore, as $n$ grows, $1-\tilde{\varepsilon}$ also increases (see Equation~(\ref{epsilon_bar})). In particular, since $$\lim_{n\rightarrow\infty} n\cdot\varphi(k,p,q,\rho)=-\infty,$$   by Equation~\ref{eq:prob_tilde}, we have
		\begin{equation*}
			\begin{split}
				\lim_{n\rightarrow \infty}\Prob(G\in H_\rho)&\geq \lim_{n\rightarrow \infty} (1-\tilde{\varepsilon})^n\\
				&=\lim_{n\rightarrow \infty}(1-e^{n\varphi})^n\\
				&=\lim_{n\rightarrow \infty} e^{-n e^{n\varphi}}\\
				&=\mathrm{EXP}\left(\lim_{n\rightarrow \infty} -n e^{n\varphi}\right) \\
				&=\mathrm{EXP}\left(-\lim_{n\rightarrow \infty} \frac{n}{e^{-n\varphi}}\right).
			\end{split}
		\end{equation*}
		Therefore, by l'Hospital's rule, we have
		\begin{equation*}
			\lim_{n\rightarrow \infty}\Prob(G\in H_\rho)\geq \mathrm{EXP}\left(-\lim_{n\rightarrow\infty}\frac{1}{-\varphi e^{-n\varphi}}\right)=e^0=1. 
		\end{equation*}
		Consequently, when $0\leq \rho< \tilde{\xi}$, we have $\lim_{n\rightarrow \infty}\Prob(G\in H_\rho)=1$.
		
	\end{proof}
	
	Figure~\ref{fig:n-r-comm} illustrates how increasing the number of vertices $n$ affects the number of $\rho$-happy vertices of the colouring induced by the communities of graphs in $\mathcal{G}(n,20,0.7,0.06)$ (i.e, $k=20$, $p=0.7$ and $q=0.06$). For $\rho<\tilde{\xi}$, where
	\[ \begin{split}
		\tilde{\xi}&=\min\left\{\mathrm{ln}\left(\frac{p e +(k-1)q}{p+(k-1)q}\right),\;\frac{p}{p+(k-1)q}\right\}\\ &=\min\{\mathrm{ln}(1.6547), 0.3804\}=0.3804,
	\end{split}\]
	we must have $\lim_{n\rightarrow \infty}\Prob(G\in H_\rho)=1$. The phase transition of trends is obvious when $\rho\approx 0.38$ (see the almost flat trend in Figure~\ref{fig:n-comm-2d}).

	\begin{figure}[h!]
		\centering
		
		\begin{subfigure}{0.9\textwidth}
			\includegraphics[scale=0.5]{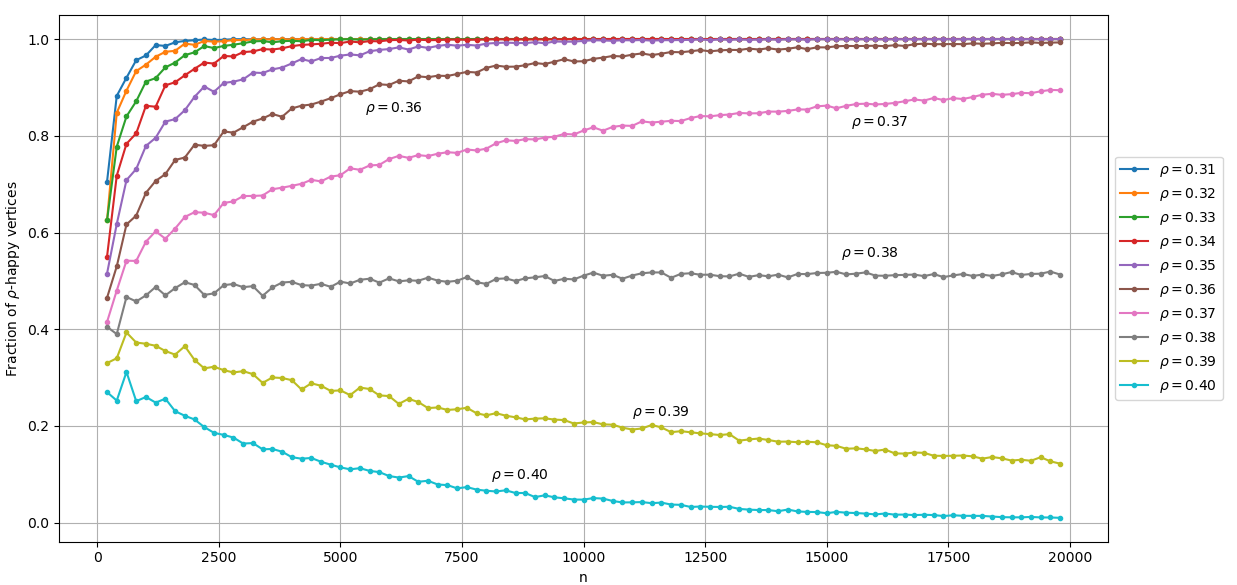}
			\caption{}
			\label{fig:n-comm-2d}
		\end{subfigure}

		\begin{subfigure}{0.75\textwidth}
			\includegraphics[scale=0.8]{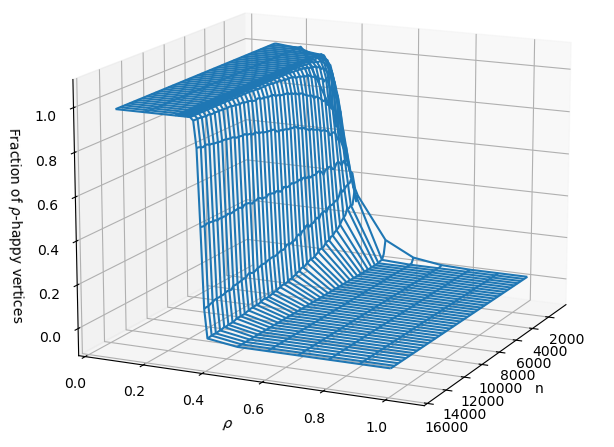}
			\caption{}
			\label{fig:n-r-comm-xi}
		\end{subfigure} 
		\caption{The charts show the effects of increasing the number of vertices on the average number of $\rho$-happy vertices in the colouring induced by communities for graphs in the SBM. Here, $k=20$, $p=0.7$ and $q=0.06$. The chart (a) shows an image of these effects when $200\leq n\leq$ 20,000 and $0.31\leq\rho\leq 0.4$. Part (b) shows these effects when $0.1\leq\rho\leq 1$. When $\rho\leq 0.37$ and $n$ grows, the average number of $\rho$-happy vertices in the colouring induced by communities also grows, while the graphs are entire $\rho$-happily coloured by their communities when $\rho\leq 0.35$ and $n\geq$ 10,000. Interestingly, this number decreases when $\rho>0.38$. When $\rho = 0.38$, the chart is still increasing, but at a very slow rate.}\label{fig:n-r-comm}
	\end{figure}

	\section{Algorithms}\label{sec:alg}
	
	In this section, we propose four heuristic algorithms to tackle the problem of maximising the number of $\rho$-happy vertices and examining the correlation of this problem and theoretical results from the previous section.	In the proposed algorithms, we denote the number of $\rho$-happy vertices in a colouring $c$ of a graph $G$ by $H(G,c,\rho)$.  
	
	\subsection{\sf Greedy-SoftMHV}

Algorithm~\ref{greedy1} is a straightforward update of {\sf GreedyMHV} from~\cite{ZHANG2015117}, the only difference being the change from ``happiness'' to ``$\rho$-happiness.'' It turns a partial colouring into a $k$-colouring, where only one colour is assigned to all the uncoloured vertices, ensuring the chosen colour makes as many vertices $\rho$-happy as possible. The algorithm is polynomial, i.e., $\mathcal{O}(km)$~\cite{Carpentier2023} where $m$ is the number of edges and $k$ is the number of permissible colours. 

The algorithm works as follows. As input, it takes a graph $G$, the parameter that determines the proportion of happiness $\rho$ and a partial colouring $c$. The initialisation takes place in Lines 1--2, including assigning precolours to vertices in $V$, the set of uncoloured vertices $U$ and $\text{\emph{Max}} \leftarrow 0$ which records the maximum number of happy vertices. Between Lines 3--11, each colour is selected and assigned to the uncoloured vertices, and the total number of happy vertices is computed (Line 6). If an improvement is found, it is recorded via the variables $\text{\emph{Max}}$ and $i_{\text{\emph{Max}}}$. The output of the algorithm is a complete colouring $\tilde{c}$.

\def\NoNumber#1{{\def\alglinenumber##1{}\State #1}\addtocounter{ALG@line}{-1}}
\begin{algorithm}
	\caption{({\sf Greedy-SoftMHV}) Approximating the optimal solution of $\rho$-happy colouring~\cite{ZHANG2015117} \label{greedy1}}
	\begin{flushleft} $\;$\\ \hspace*{\algorithmicindent}
		\textbf{Input:} $G$, $\rho$, $c:V'\longrightarrow \{1,\ldots,k\}$ \Comment{$V'\subsetneq V(G)$}\\ 
		\hspace*{\algorithmicindent}  \textbf{Output:} $\tilde{c}:V(G)\longrightarrow \{1,\ldots,k\}$ \Comment{a complete colouring of $G$}
	\end{flushleft}
	\begin{algorithmic}[1]
		
		\State $\forall i\in\{1,\ldots,k\}, V_i\gets\{v:c(v)=i\}$ 
		\State $U\gets V(G)-V'$, $\text{\emph{Max}}\gets 0$ and $i_{\text{\emph{Max}}}\gets 0$ \Comment{$U=$ uncoloured vertices}
		\NoNumber{ }
		
		\For{$i=1,\ldots,k$} \Comment{try all colours}
		
		\State {\bf Append} members of $U$ to $V_i$ 
		\State $\tilde{c}\gets \{V_1,\ldots,V_k\}$
		\State {\bf Calculate} $H(G,\tilde{c},\rho)$ \Comment{the number of $\rho$-happy vertices in $\tilde{c}$}
		\If{$\text{\emph{Max}}<H(G,\tilde{c},\rho)$}
		\State $\text{\emph{Max}}\gets H(G,\tilde{c},\rho)$ and $i_{\text{\emph{Max}}}\gets i$
		%\State $i_{Max}\gets i$
		\EndIf
		\State {\bf Remove} members of $U$ from $V_i$
		\EndFor
		\State  {\bf Append} members of $U$ to $V_{i_{\text{\emph{Max}}}}$ 
		
		\State {\bf Return} $\tilde{c}=\{V_1,\ldots,V_k\}$
	\end{algorithmic}
\end{algorithm}

\subsection{\sf Neighbour Greedy Colouring (NGC)} 

At the cost of increased computational time, we propose \hyperref[greedy2]{\sf NGC}, in which, unlike the previous approach, all uncoloured vertices do not always receive the same colour. Its initialization and colour assignment are exactly like  \hyperref[greedy1]{\sf Greedy-SoftMHV}. The only difference is that at each step, it only colours neighbours of already coloured vertices.

Like the \hyperref[greedy1]{\sf Greedy-SoftMHV}, the \hyperref[greedy2]{\sf NGC} takes a graph $G$, $0\leq \rho\leq 1$, and a partial colouring $c$.  
Then, it initialises variables $U$ as the set of uncoloured vertices, $\text{\emph{Max}}$ as the maximum number of $\rho$-happy vertices until now, and $i_{\text{\emph{Max}}}$ as the colour that makes this maximum.
In Lines 3--12, it colours all the uncoloured vertices to see for which colour, say $i_{\text{\emph{Max}}}$, the number of $\rho$-happy vertices is the largest. Afterwards in Lines 13--14, only the uncoloured neighbours of already coloured vertices by $i_{\text{\emph{Max}}}$ receive the colour $i_{\text{\emph{Max}}}$. The procedure runs until all uncoloured vertices receive a colour, and then reports the complete colouring $\tilde{c}$. 

\begin{algorithm}
	\caption{({\sf Neighbour Greedy Colouring}) Approximating the optimal solution of $\rho$-happy colouring}\label{greedy2}
	\begin{flushleft} $\;$\\ \hspace*{\algorithmicindent}
		\textbf{Input:} $G$, $\rho$, $c:V'\longrightarrow \{1,\ldots,k\}$ \Comment{$V'\subsetneq V(G)$} \\
		\hspace*{\algorithmicindent} \textbf{Output:} $\tilde{c}:V(G)\longrightarrow \{1,\ldots,k\}$ \Comment{a complete colouring of $G$}
	\end{flushleft}
	\begin{algorithmic}[1]
		
		\State $\forall i\in\{1,\ldots,k\}, V_i\gets\{v:c(v)=i\}$ 
		\State $U\gets V(G)-V'$, $\text{\emph{Temp}}\gets \emptyset$, $\text{\emph{Max}}\gets 0$  and $i_{\text{\emph{Max}}}\gets 0$\Comment{$U=$ uncoloured vertices}
		\NoNumber{ }
		\While{$U\neq \emptyset$}
		\For{$i=1,\ldots,k$} \Comment{try all colours}
		
		\State {\bf Append} members of $U$ to $V_i$ 
		\State $\tilde{c}\gets \{V_1,\ldots,V_k\}$
		\State {\bf Calculate} $H(G,\tilde{c},\rho)$ \Comment{the number of $\rho$-happy vertices in $\tilde{c}$}
		\If{$\text{\emph{Max}}<H(G,\tilde{c},\rho)$}
		\State $\text{\emph{Max}}\gets H(G,\tilde{c},\rho)$ and $i_{\text{\emph{Max}}}\gets i$
		\EndIf
		\State {\bf Remove} members of $U$ from $V_i$
		\EndFor
		\State $\text{\emph{Temp}}\gets U\cap N(V_{i_{\text{\emph{Max}}}})$
		\State {\bf Remove} members of \emph{Temp} from $U$ and {\bf Append} members of \emph{Temp} to $V_{i_{\mathrm{Max}}}$
		\EndWhile
		\State {\bf Return} $\tilde{c}=\{V_1,\ldots,V_k\}$
	\end{algorithmic}
\end{algorithm}

The time complexity of the \hyperref[greedy2]{\sf NGC} is $d=\mathrm{diam}(G)$ times the time complexity of \hyperref[greedy1]{\sf Greedy-SoftMHV}, or $\mathcal{O}(dkm)$. This is because in the worst case, it has to repeat $d$ times the processes 
of \hyperref[greedy1]{\sf Greedy-SoftMHV} in Lines 4--12 to colour all the vertices. 

Moreover, the solution quality of the \hyperref[greedy2]{\sf NGC} is at least as good as that of \hyperref[greedy1]{\sf Greedy-SoftMHV}. This is because if the \hyperref[greedy2]{\sf NGC} ends with assigning only one colour to all non-precoloured vertices, then it gives the exact output as \hyperref[greedy1]{\sf Greedy-SoftMHV}. If \hyperref[greedy2]{\sf NGC} ends with assigning more than two colours to non-precoloured vertices,  the number of $\rho$-happy vertices is higher than that of the output of the \hyperref[greedy1]{\sf Greedy-SoftMHV}.  

\subsection{{\sf Local Maximal Colouring}} 

Our third proposed heuristic algorithm, \hyperref[greedy3]{\sf Local Maximal Colouring (LMC)}, finds $\rho$-happy colour classes by examining local neighbourhoods of the uncoloured vertices, identifying the most frequently appearing colour, and assigning this colour to all uncoloured vertices in this neighbourhood.

The \hyperref[greedy3]{\sf LMC}'s inputs are a graph $G$ and a partial colouring $c$. Then, it employs two variables in Lines 2--3, namely $U$ and $C$, as the set of uncoloured vertices and the set of coloured ones, respectively. Afterwards, an uncoloured vertex $v$, adjacent to at least one coloured vertex, is chosen. Then in Lines 6--7, $q$ is the most frequent colour in $N(v)$, and $v$ receives the colour $q$. The vertex $v$ is now coloured, which is added to the set of coloured vertices $C$ and removed from the uncoloured vertices. This procedure continues until no uncoloured vertex remains. 

\begin{algorithm}
\caption{({\sf Local Maximal Colouring}) Approximating the optimal solution of $\rho$-happy colouring.}\label{greedy3}
\begin{flushleft} $\;$\\ \hspace*{\algorithmicindent}
	\textbf{Input:} $G$, $c:V'\longrightarrow \{1,\ldots,k\}$ \Comment{$V'\subsetneq V(G)$}\\
	\hspace*{\algorithmicindent} \textbf{Output:} $\tilde{c}:V(G)\longrightarrow \{1,\ldots,k\}$ \Comment{a complete colouring of $G$}
\end{flushleft}
\begin{algorithmic}[1]
	
	\State $\forall i\in\{1,\ldots,k\}, V_i\gets\{v:c(v)=i\}$
	\State $U\gets  V(G)-V'$\Comment{$U=$ uncoloured vertices}
	\State $C\gets V_1 \cup \ldots \cup V_k$ %\AN{C was used for something else before.}
	\Comment{$C=$ coloured vertices} 
	\NoNumber{ }
	\While{$U\neq \emptyset$}
	\State {\bf Choose} $v\in U\cap N(C)$
	
	\State $q\gets $ the colour which appears the most in $N(v)$
	\State {\bf Append} $v$ to $V_{q}$, {\bf Append} $v$ to $C$ and {\bf Remove} $v$ from $U$
	\EndWhile
	\State {\bf Return} $\tilde{c}=\{V_1,\ldots,V_k\}$
\end{algorithmic}
\end{algorithm}

The \hyperref[greedy3]{\sf LMC} has a time complexity of $\mathcal{O}(m)$, owing to the main loop (Lines~4--10), where the colours of neighbours of $v$ are examined. This happens at most $2m$ times, i.e., twice for each edge. Experiments (Section~\ref{Sec:exprimental}) show that \hyperref[greedy3]{\sf LMC} can detect community structures close to the graphs' true community structures. Moreover, its time complexity, near-linear, is the lowest among the algorithms investigated. However, since it does not rely on $\rho$, its output can occasionally have fewer $\rho$-happy vertices than other algorithms.

\subsection{\sf Growth-SoftMHV}

Algorithm~\ref{growth} has been adapted from the {\sf Growth-SoftMHV} algorithm of~\cite{ZHANG2015117}  
to solve Soft happy colouring. The main process of this algorithm is determining the vertex classifications introduced in the following definition.
	
	\begin{definition}\label{Def:Growth} \textnormal{\cite[altered from Definitions 8 and 9]{ZHANG2015117}}
		Let $G$ be a graph whose vertices are partially coloured by $c:V'\longrightarrow\{1,\ldots,k\}$, $V'\subsetneq V(G)$, $V_1,\ldots,V_k$ are colour classes, $U$ is the set of uncoloured vertices, $v\in V(G)$ and $0<\rho \leq 1$.
		\begin{itemize}
			\item[1.] $v$ is an H-vertex if it is coloured and $\rho$-happy.
			\item[2.] $v$ is a U-vertex if
			\begin{itemize}
				\item[2.a.] $v$ is coloured, and
				\item[2.b.] $v$ is destined to be $\rho$-unhappy, (i.e., $| N(v) \cap V_{c(v)}| +| N(v) \cap U| < \rho \cdot \deg (v)$)
			\end{itemize}
			\item[3.] $v$ is a P-vertex if
			\begin{itemize}
				\item[3.a.]  $v$ is coloured,
				\item[3.b.] $v$ has not been $\rho$-happy (i.e., $|N(v)\cap  V_{c(v)}| < \rho\cdot \deg (v)$), and
				\item[3.c.] $v$ can become an $H$-vertex (i.e., $| N(v) \cap V_{c(v)}| + | N(v)\cap U| \geq \rho \cdot \deg (v)$)
			\end{itemize}
			\item[4.] $v$ is an L-vertex if it has not been coloured.
			\begin{itemize}
				\item[4.1.] $v$ is an $\mathrm{L}_\mathrm{p}$-vertex if it is adjacent to a P-vertex,
				\item[4.2.] $v$ is an $\mathrm{L}_\mathrm{h}$-vertex if
				\begin{itemize}
					\item[4.2.a.] $v$ is not adjacent to any P-vertex,
					\item[4.2.b.] $v$ is adjacent to an H-vertex or a U-vertex, and
					\item[4.2.c.] $v$ can become $\rho$-happy, that is, $$|N (v)\cap U| + \max\{| N (v)\cap V_i|: 1 \leq i \leq k\} \geq \rho \cdot \deg (v).$$
				\end{itemize}
				\item[4.3.] $v$ is an $\mathrm{L}_\mathrm{u}$-vertex if
				\begin{itemize}
					\item[4.3.a.] $v$ is not adjacent to any P-vertex,
					\item[4.3.b.] $v$ is adjacent to an H-vertex or a U-vertex, and
					\item[4.3.c.] $v$ is destined to be $\rho$-unhappy, that is, $$|N(v)\cap U| + \max\{| N(v)\cap V_i|: 1 \leq i \leq k\} < \rho\cdot \deg (v).$$
				\end{itemize}
				\item[4.4.] $v$ is an $\mathrm{L}_\mathrm{f}$-vertex if it is not adjacent to a coloured vertex.
			\end{itemize}
		\end{itemize}
	\end{definition}

The inputs of the \hyperref[growth]{\sf Growth-SoftMHV} algorithm are a graph $G$, $\rho$, and a partial $k$-colouring $c$. It first generates the colour classes (Line 1) and identifies all $P$, $L_h$, and $L_u$-vertices in Line 2, where it assigns them to sets $L=P\cup L_h \cup L_u$. Following this in Lines 4--8, it checks if there is a $P$-vertex $v$, it then chooses just enough neighbours of $v$ and colours them by the colour of $v$ so that $v$ becomes $\rho$-happy. 
Then it recalculates the vertex classes $P$, $L_h$, $L_u$, and $L$. The above steps may generate other $P$-vertices, hence, the procedure repeats until no $P$-vertex remains.

After running out of  $P$-vertices, the algorithm checks in Lines 9--15 if there is an $L_h$ vertex $v$ and chooses a colour $i$ that appears the most among the neighbours of $v$. Then it colours $v$ and some of its neighbours by the colour $i$ so that $v$ becomes a $\rho$-happy vertex. Next, because these colour assignments might generate new $P$ or $L_h$-vertices, the vertex classification of $P$, $L_h$, $L_u$ and $L$ is updated. The algorithm repeats Lines 4--8 and 9--15 
until no $P$ or $L_h$-vertex is left. 

\begin{algorithm}
\caption{({\sf Growth-SoftMHV}) Approximating the optimal solution of soft happy colouring}\label{growth}
\begin{flushleft} $\;$\\ \hspace*{\algorithmicindent}
\textbf{Input:} $G$, $\rho$, $c:V'\longrightarrow \{1,\ldots,k\}$ \Comment{$V'\subsetneq V(G)$} \\
\hspace*{\algorithmicindent} \textbf{Output:} $\tilde{c}:V(G)\longrightarrow \{1,\ldots,k\}$ \Comment{a complete colouring of $G$}
\end{flushleft}
\begin{algorithmic}[1]

\State $\forall i\in\{1,\ldots,k\}, V_i\gets\{v:c(v)=i\}$ %\Comment{start of variable initialization}
\State {\bf Calculate} $P, L_h, L_u$ and $L$ \Comment{Definition \ref{Def:Growth}}
\NoNumber{ }
\While{$L\neq \emptyset$}
\While{$P\neq \emptyset$}
\State {\bf Choose} $v\in P$
\State {\bf Choose} $\left\lceil \rho\cdot \deg (v)\right\rceil - |N(v)\cap V_{c(v)}|$ from $N(v)\cap L$ and {\bf Append} them to $V_{c(v)}$
\State {\bf ReCalculate} $P, L_h, L_u$ and $L$ \Comment{Definition \ref{Def:Growth}}
\EndWhile
\While{($P=\emptyset$ and $L_h\neq \emptyset$)}
\State {\bf Choose} $v\in L_h$
\State {\bf Choose} $i\in\{1,\ldots,k\}$ such that $\forall t, |V_i\cap N(v)|\geq |V_t \cap N(v)|$
\State {\bf Append} $v$ to $V_i$
\State {\bf Choose} $\left\lceil \rho\cdot \deg (v)\right\rceil - |N(v)\cap V_i|$ from $N(v)\cap L$ and {\bf Append} them to $V_i$
\State {\bf ReCalculate} $P, L_h, L_u$ and $L$ \Comment{Definition \ref{Def:Growth}}
\EndWhile
\While{($P=\emptyset$ and $L_h=\emptyset$ and $L_u\neq \emptyset$)}
\State {\bf Choose} $v\in L_u$
\State {\bf Choose} $i\in\{1,\ldots,k\}$ such that $\forall t, |V_i\cap N(v)|\geq |V_t \cap N(v)|$
\State {\bf Append} $v$ to $V_i$
\State {\bf Choose} $\left\lceil \rho\cdot \deg (v)\right\rceil - |N(v)\cap V_i|$ from $N(v)\cap L$ and {\bf Append} them to $V_i$
\State {\bf ReCalculate} $P,L_h, L_u$ and $L$ \Comment{Definition \ref{Def:Growth}}
\EndWhile
\EndWhile
\State {\bf Return} $\tilde{c}=\{V_1,\ldots,V_k\}$
\end{algorithmic}
\end{algorithm}

If a complete colouring has not yet been obtained, at least one $L_u$-vertex must exist because the graph is assumed to be connected and $L_f$-vertices have no coloured neighbours. 
The algorithm in Lines 16--22 chooses an $L_u$-vertex $v$ and a colour $i$ that appears the most among neighbours of $v$, then the colour of $v$ and some of its uncoloured vertices turns into the colour $i$. After recalculating the vertex classifications $P$, $L_h$, $L_u$, and $L$, the algorithm repeats Lines 4--22 
until no $P$, $L_h$, or $L_u$ vertex is left. 
Now, Every vertex must have a colour, so the algorithm reports the complete colouring. The time complexity of \hyperref[growth]{\sf Growth-SoftMHV} has been calculated by Carpentier et al.~\cite{Carpentier2023} as $\mathcal{O}(mn)$.

\section{Experimental evaluation}\label{Sec:exprimental}

We conducted experiments on various graphs based on the SBM to investigate the performance of the proposed algorithms and validate our theoretical results. We developed a problem generator, from which we generated %19,000 partially coloured graphs with 1000 vertices
our sample graphs.\footnote{The graphs are generated using \texttt{Python} under 
the \texttt{stochastic\textunderscore block\textunderscore model} from the package \texttt{NetworkX}} The problem generator, the source code for the algorithms and the graphs tested (stored in DIMACS format) are available online.\footnote{at \href{https://github.com/mhshekarriz/HappyColouring_SBM}{https://github.com/mhshekarriz/HappyColouring\_SBM}} 
Here, we used a computer running $12\times 3.60$~GHz Intel\texttrademark Xeon\texttrademark CPUs, 32 GB of RAM, and 512 GB memory 
to generate the problem instances and run all algorithms.

To test the algorithms, we generated partially coloured sample graphs considering the following parameters: number of vertices $n$, number of partitions (communities or colour classes) $k$, the edge probability inside communities $p$, the probability of inter-community edges $q$, the number of precoloured vertices per community $pcc$, a random seed for generating the graphs and the proportion of happiness $\rho$. Because increasing $n$ impacts the running time of the algorithms, we split the test into two parts. First, we check the algorithms on 19,000 partially coloured graphs with 1,000 vertices on several combinations of other parameters. This enables us to see the effects of changing other parameters more clearly. Afterwards, 28,000 partially coloured graphs on $200\leq n <$ 3,000 vertices are tested (10 graphs for each $n$), while other parameters are chosen at random. In both tests, we make sure that no two vertices in a community are precoloured with different colours. This is to prevent the induced precolouring from contradicting the community structure of the graphs. Otherwise, we cannot expect that the induced colouring by communities makes the entire vertex set $\rho$-happy. The settings for the two tests are as follows.

\begin{itemize}
    \item[{\bf Test 1.}] Graphs for this test were generated with the settings $n=$ 1,000, $k=2,3,\ldots, 20$, $p=0.1,0.2,\ldots,0.9$, $q=0.01,0.11,0.21, \ldots, \leq \frac{p}{2}$. For each combination of these values, four instances were generated. The number of precoloured vertices in each community varied from 1 to 10. Each of the 19,000 graphs is then tested for $\rho = 0.1, 0.2, \ldots, 1$. The time limit is set to 40 seconds. The results of running each of the algorithms 190,000 times in total are summarised in Section~\ref{sec:1000}.

    \item[{\bf Test 2.}] For this test graphs are randomly generated with $200\leq n<$ 3,000 vertices. For each $n$, 10 instances are generated, which makes it a set of 28,000 randomly generated graphs. For each graph, other parameters are randomly chosen over the same interval of the previous test i.e. $k\in\{2,3,\ldots,20\}$, $p\in (0,1]$, $q\in (0, \frac{p}{2}]$ and $\rho\in (0,1]$. The time limit for this test is 120 seconds. The results from this test are summarised in Section~\ref{sec:scale}.
\end{itemize}

\subsection{Test 1: graphs on 1,000 vertices}\label{sec:1000}

Perhaps one of the most important test results for us, as anticipated by Theorem~\ref{th:SBM-happy}, is verifying that if $k$ and $\rho$ are small enough, then communities of networks in the SBM induce a $\rho$-happy colouring. The results of the first test fully supports this, see Figure~\ref{fig:k-r-happy}. On the other hand, when $k$ or $\rho$ increases, the induced colouring by graph communities cannot make all the vertices $\rho$-happy. Especially when $\rho=1$, the induced colouring makes almost no vertex happy. 

\begin{figure}
% \captionsetup{size=small}
% 		\begin{subfigure}{0.5\textwidth}
% 			\includegraphics[scale=0.8]{r-k-comm.png}
% 			\caption{}
% 			\label{fig:k-r-happy-n}
% 		\end{subfigure}
%   \hfill
%   		\begin{subfigure}{0.5\textwidth}
\centering
			\includegraphics[scale=0.6]{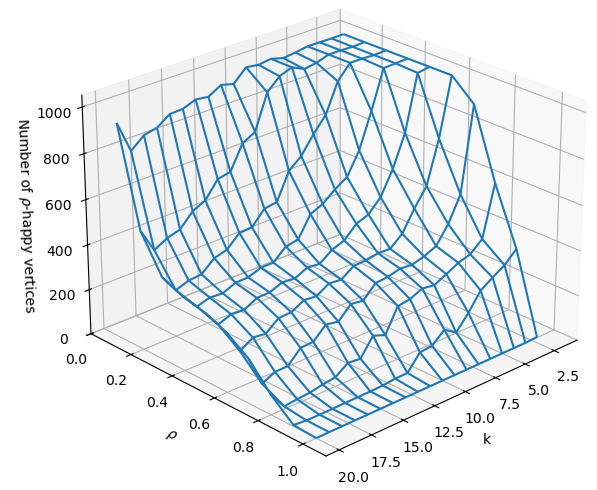} 
		% 	\caption{}
		% 	\label{fig:k-r-happy-1000}
		% \end{subfigure}

\caption{The average number of $\rho$-happy vertices of the induced colouring by communities with respect to $\rho$ and the number of colours.}    \label{fig:k-r-happy}
\end{figure} 

Figure~\ref{fig:q-p-com} illustrates how the average number of $\rho$-happy vertices in the colouring induced by the tested graphs' communities can be affected by the fraction $\frac{p}{q}$. As expected, when $\frac{p}{q}$ is large enough, the communities of the graphs are more straightforward to detect. Consequently, the average number of $\rho$-happy vertices in the colouring induced by the communities is directly related to $\frac{p}{q}$.

\begin{figure}
\centering
\includegraphics[scale=0.6]{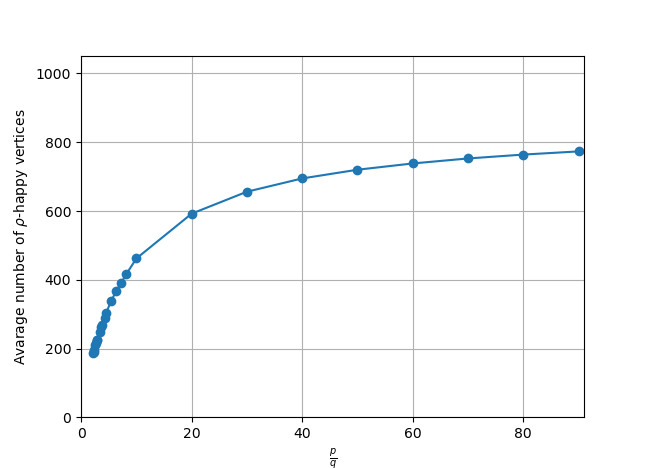}
\caption{The average number of $\rho$-happy vertices of colourings induced by communities with respect to $\frac{p}{q}$. Here, the average is taken for all tested $\rho$s, including the case $\rho=1$ for which we almost always have a lot of unhappy vertices.  }
\label{fig:q-p-com}
\end{figure}

Figure~\ref{fig:3d} demonstrates how changes in parameters affect the average number of $\rho$-happy vertices in the colouring induced by the graphs' communities. Figure~\ref{fig:p-k-comm} indicates that this number increases when $p$ increases. Figure~\ref{fig:q-k-comm} emphasises how increasing $q$ makes the average number of $\rho$-happy vertices drop. These two figures, and Figures~\ref{fig:k-r-happy} and~\ref{fig:pcc-k-comm} as well, demonstrate the negative effect of increasing the number of communities, $k$, on the average number of $\rho$-happy vertices induced by these communities. Figures~\ref{fig:r-p-comm} and~\ref{fig:r-q-comm}, and Figures~\ref{fig:k-r-happy} and~\ref{fig:r-pcc-comm} as well, illustrate a similar trend for $\rho$. The insignificant relation of the average number of $\rho$-happy vertices induced by graphs' communities and the number of precoloured vertices per community can be seen in Figures~\ref{fig:r-pcc-comm} and~\ref{fig:pcc-k-comm}. All these trends agree with the theoretical results in Section~\ref{sec:soft_SBM}. 

\begin{figure}
\captionsetup{size=small}
\begin{subfigure}{0.4\textwidth}
\includegraphics[scale=0.6]{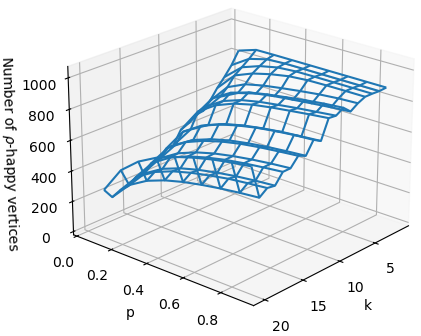}
\caption{}\label{fig:p-k-comm}
\end{subfigure} 
\hfill
\begin{subfigure}{0.4\textwidth}
\includegraphics[scale=0.6]{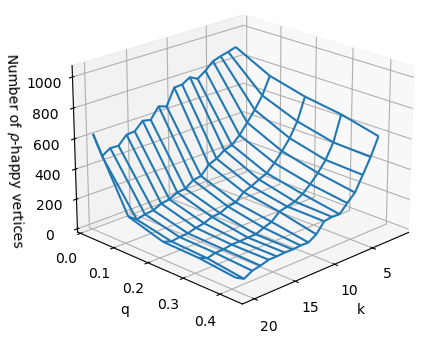}
\caption{}\label{fig:q-k-comm}
\end{subfigure} 

\begin{subfigure}{0.4\textwidth}  
\includegraphics[scale=0.6]{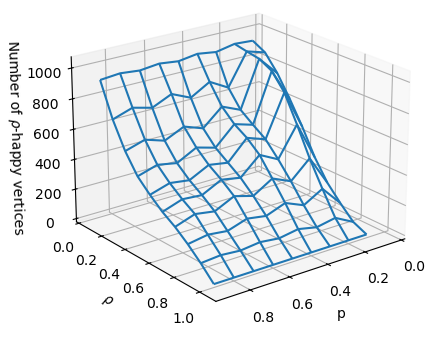}
\caption{}\label{fig:r-p-comm}
\end{subfigure} 
\hfill
\begin{subfigure}{0.4\textwidth}  
\includegraphics[scale=0.6]{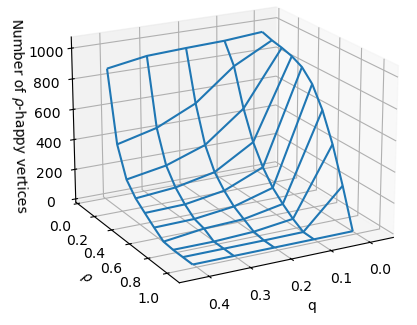}
\caption{}\label{fig:r-q-comm}
\end{subfigure} 

\begin{subfigure}{0.4\textwidth}      
\includegraphics[scale=0.6]{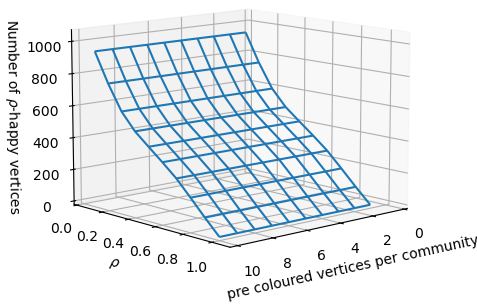} 
\caption{}\label{fig:r-pcc-comm}
\end{subfigure} 
\hfill
\begin{subfigure}{0.4\textwidth}   
\includegraphics[scale=0.6]{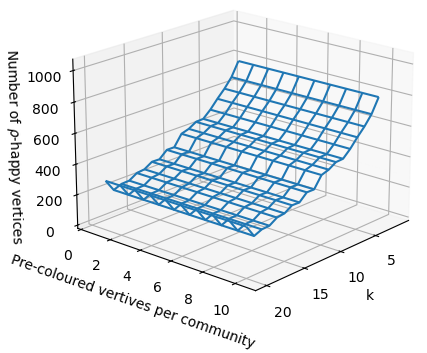}
\caption{}\label{fig:pcc-k-comm}
\end{subfigure} 

\caption{These 3-dimensional charts show the average number of $\rho$-happy vertices when communities induce colour classes compared for (a) the parameters $k$ and $p$, (b) for $k$ and $q$, (c) for $\rho$ and $p$, (d) for $\rho$ and $q$, (e) for $\rho$ and the number of precoloured vertices per community and (f) for $k$ and the number of precoloured vertices per community.}
\label{fig:3d}
\end{figure}

Figures~\ref{fig:time}, \ref{fig:happy} and \ref{fig:comm}, respectively, show the average CPU runtime, the average quality of performance and the average quality of community detection of Algorithms~\ref{greedy1} to~\ref{growth}. The measure of performance is the number of $\rho$-happy vertices in their outputs, and their measure of community detection is the fraction of vertices whose colours in the output graph agree with their communities in the generated graph.  Detecting communities is not explicitly an objective of these algorithms, nonetheless, all algorithms (especially LMC) can effectively detect communities.

Increasing the number of precoloured vertices per community does not significantly affect the run times, %and especially   \hyperref[greedy1]{\sf Greedy-SoftMHV} and \hyperref[greedy3]{\sf LMC} where there is no difference (
see Figure~\ref{fig:pcc-time}. It negatively impacts the number of $\rho$-happy vertices detected by the algorithms (see Figure~\ref{fig:pcc-happy}), while detecting communities improves (see Figure~\ref{fig:pcc-comm}).

\begin{figure}%[!ht]
\captionsetup{size=small}
\begin{subfigure}{0.5\textwidth}
\includegraphics[scale=0.5]{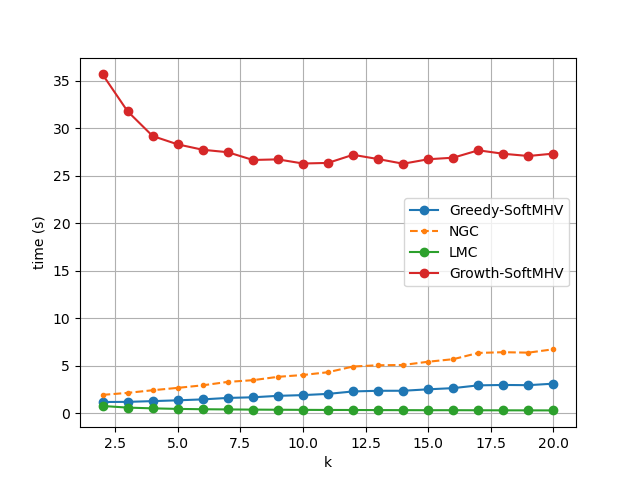}     
\caption{}\label{fig:k-time}
\end{subfigure} 
\hspace{2mm}
\begin{subfigure}{0.5\textwidth}          
\includegraphics[scale=0.5]{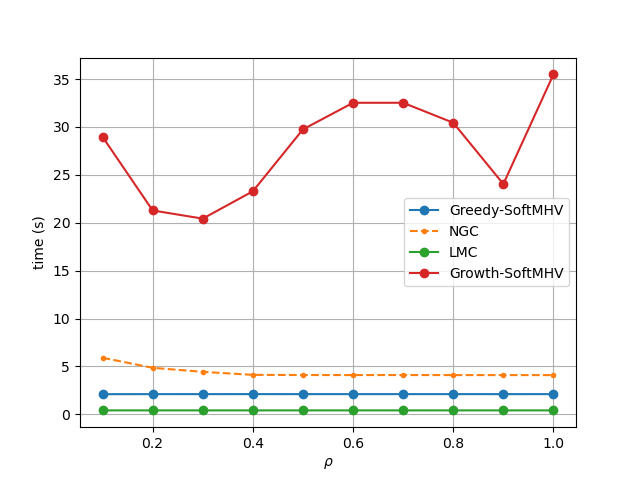} 
\caption{}\label{fig:r-time}
\end{subfigure} 

\begin{subfigure}{0.5\textwidth}  
\includegraphics[scale=0.5]{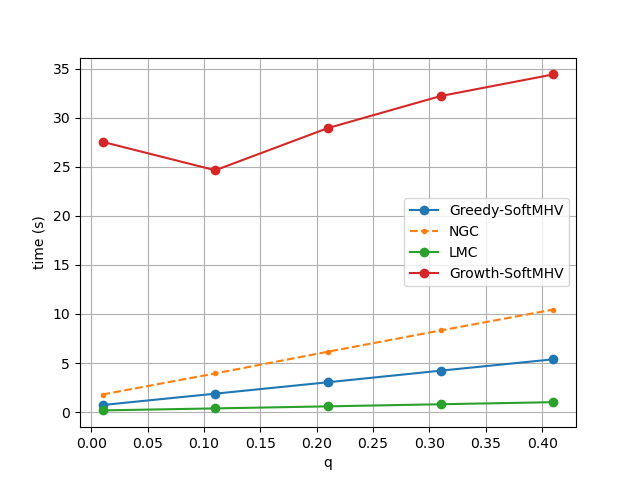} 
\caption{}\label{fig:q-time}
\end{subfigure} 
\hspace{2mm}
\begin{subfigure}{0.5\textwidth}       
\includegraphics[scale=0.5]{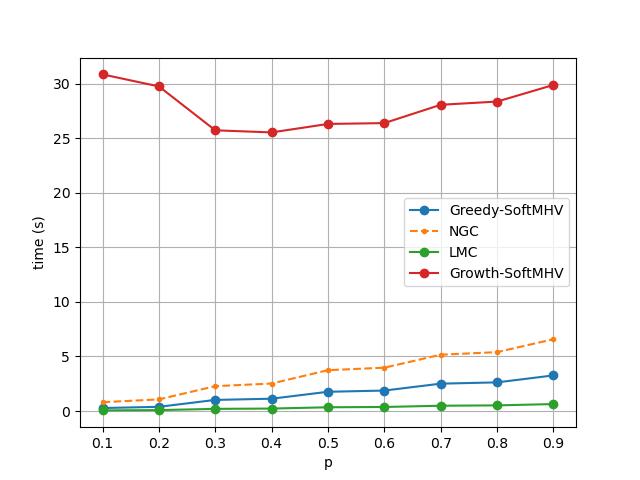}
\caption{}\label{fig:p-time}
\end{subfigure}

\centering
\begin{subfigure}{0.5\textwidth}  
\includegraphics[scale=0.5]{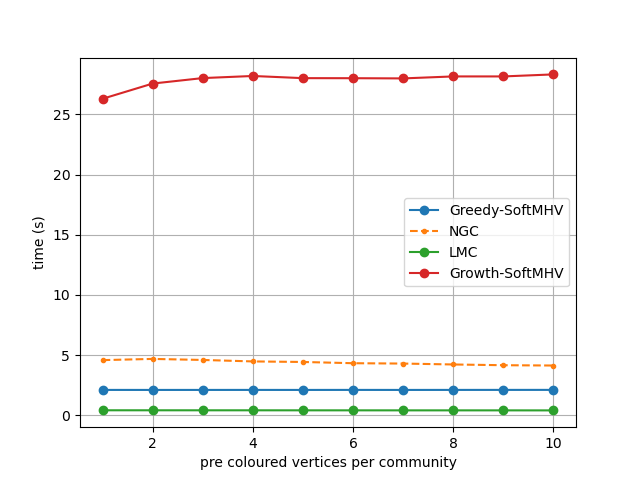} 
\caption{}\label{fig:pcc-time}
\end{subfigure} 
\caption{Comparing the average running time of the four algorithms concerning the parameters (a) $k$, (b) $\rho$, (c) $q$, (d) $p$ and (e) the number of precoloured vertices per community.}
\label{fig:time}
\end{figure}

Interestingly, both algorithms \hyperref[greedy1]{\sf Greedy-SoftMHV} and  \hyperref[greedy2]{\sf NGC} detect a similar number of $\rho$-happy vertices (see Figure~\ref{fig:happy}, especially Figure~\ref{fig:k-happy}). However, they do not detect communities effectively (see Figure~\ref{fig:comm}, especially Figure~\ref{fig:k-comm}). This is unsurprising since they assign only the one colour to all uncoloured vertices. The run times for \hyperref[greedy1]{\sf Greedy-SoftMHV} is substantially lower than  \hyperref[greedy2]{\sf NGC}, and hence, using \hyperref[greedy2]{\sf NGC} cannot be justified for real-world applications.

\begin{figure}
\captionsetup{size=small}
\begin{subfigure}{0.5\textwidth}
\includegraphics[scale=0.5]{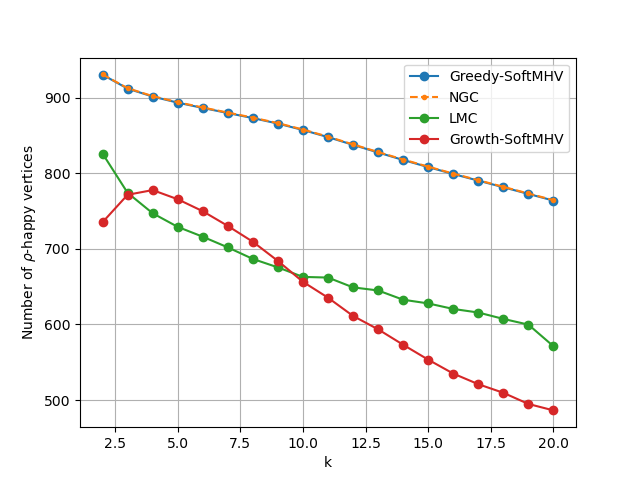} 
\caption{}\label{fig:k-happy}
\end{subfigure} 
\hspace{2mm}
\begin{subfigure}{0.5\textwidth}          
\includegraphics[scale=0.5]{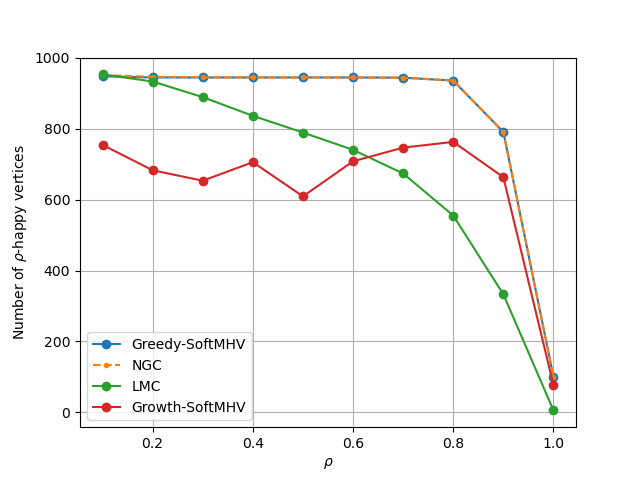}  
\caption{}\label{fig:r-happy}
\end{subfigure} 

\begin{subfigure}{0.5\textwidth}  
\includegraphics[scale=0.5]{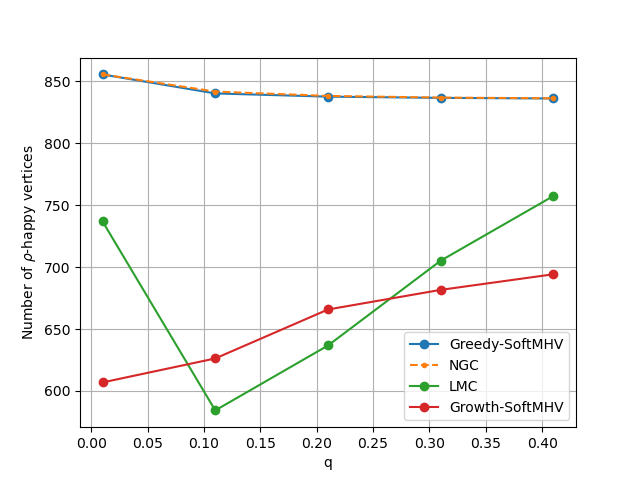} 
\caption{}\label{fig:q-happy}
\end{subfigure} 
\hspace{2mm}
\begin{subfigure}{0.5\textwidth}  
\includegraphics[scale=0.5]{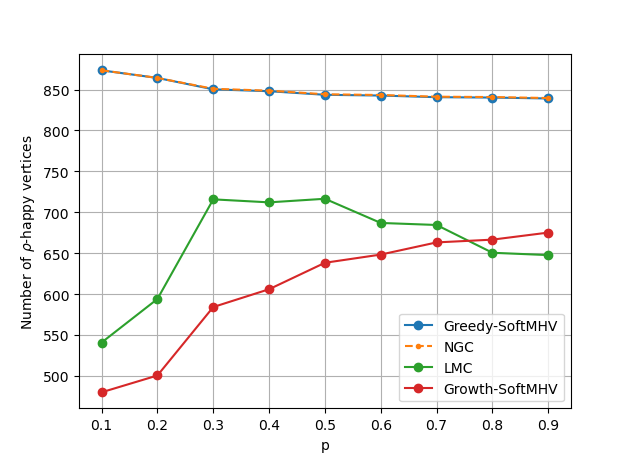}
\caption{}\label{fig:p-happy}
\end{subfigure} 

\centering
\begin{subfigure}{0.5\textwidth}  
\includegraphics[scale=0.5]{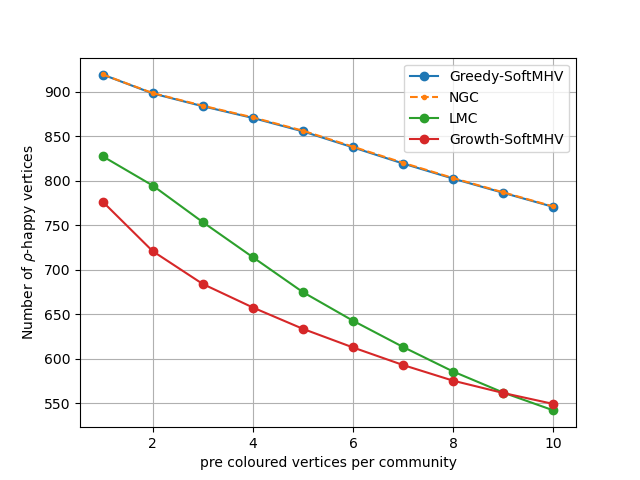} 
\caption{}\label{fig:pcc-happy}
\end{subfigure} 
\caption{Comparing the average number of $\rho$-happy vertices in the output of the four algorithms to the parameters (a) $k$, (b) $\rho$, (c) $q$, (d) $p$ and (e) the number of precoloured vertices per community.}
\label{fig:happy}
\end{figure}

\hyperref[growth]{\sf Growth-SoftMHV} requires large run times, but to make fair comparisons, we limit its CPU run time to 40 seconds (see Figure~\ref{fig:time}). In this setting, all other algorithms find a higher number of $\rho$-happy colourings and typically detect communities more accurately (see Figures~\ref{fig:happy} and~\ref{fig:comm}, especially sub-figures~\ref{fig:p-happy} and~\ref{fig:q-comm}).

\begin{figure}
\captionsetup{size=small}
\begin{subfigure}{0.5\textwidth}
\includegraphics[scale=0.5]{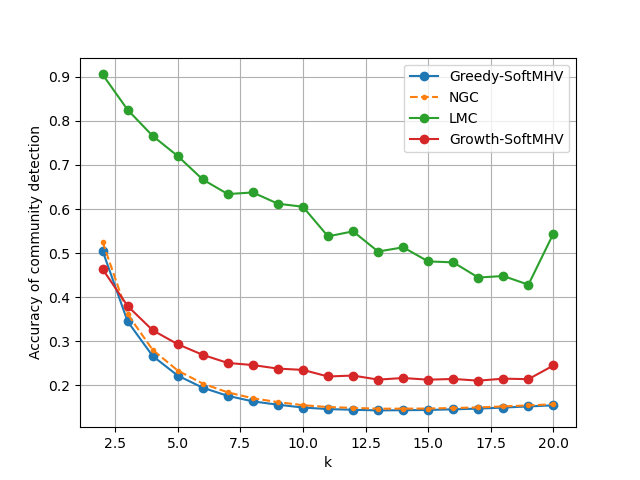} 
\caption{}\label{fig:k-comm}
\end{subfigure} 
\hspace{2mm}
\begin{subfigure}{0.5\textwidth}  
\includegraphics[scale=0.5]{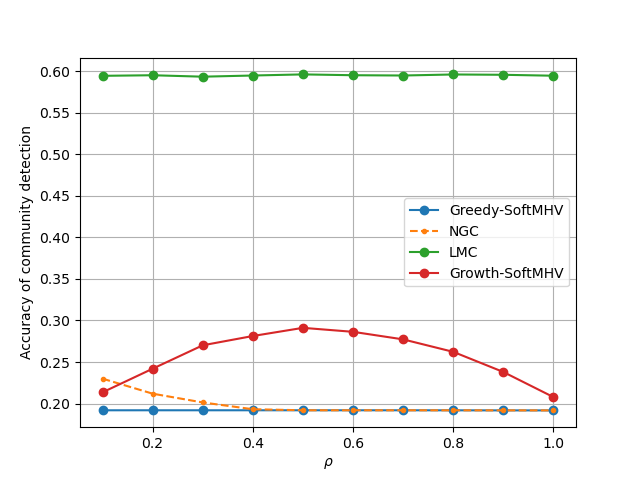}
\caption{}\label{fig:r-comm}
\end{subfigure} 

\begin{subfigure}{0.5\textwidth}  
\includegraphics[scale=0.5]{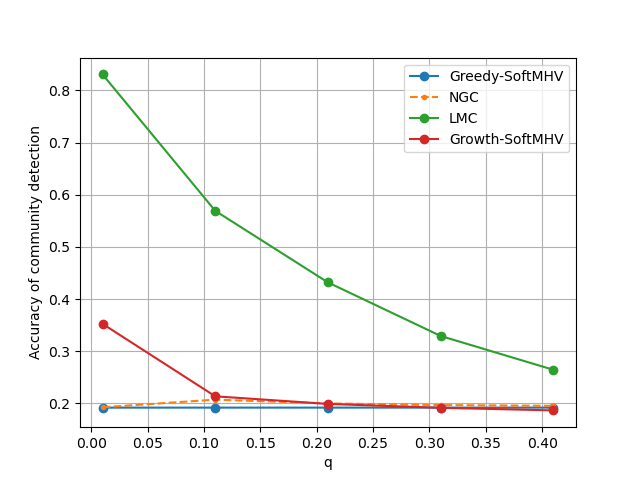} 
\caption{}\label{fig:q-comm}
\end{subfigure} 
\hspace{2mm}
\begin{subfigure}{0.5\textwidth}          
\includegraphics[scale=0.5]{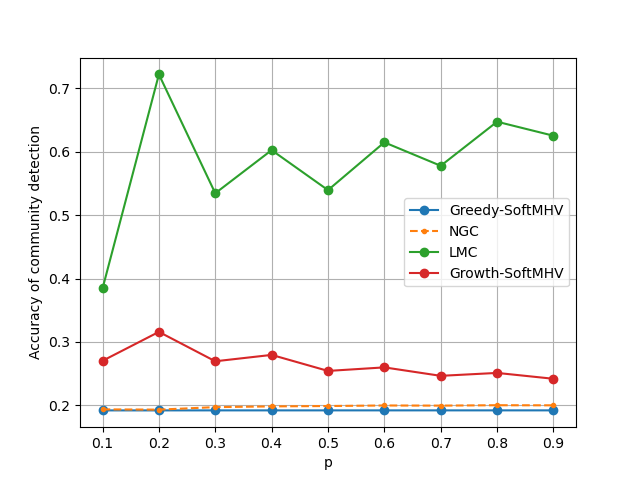}
\caption{}\label{fig:p-comm}
\end{subfigure} 

\centering
\begin{subfigure}{0.5\textwidth}  
\includegraphics[scale=0.5]{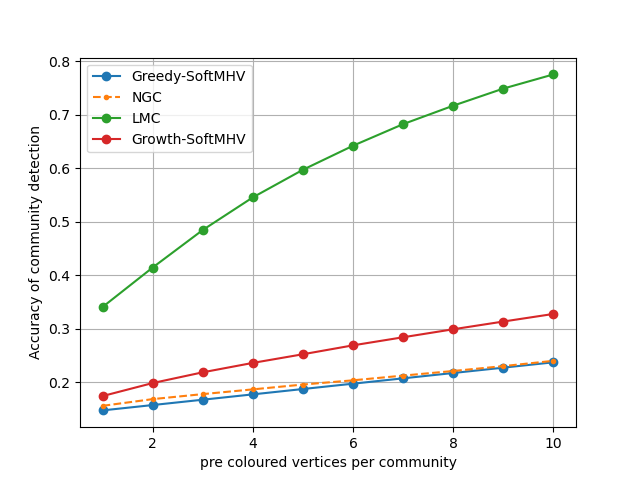} 
\caption{}\label{fig:pcc-comm}
\end{subfigure} 
\caption{Comparing the average quality of community detection of the four algorithms to changes of the parameters (a) $k$, (b) $\rho$, (c) $q$, (d) $p$ and (e) the number of precoloured vertices per community. The quality of community detection is the ratio of vertices whose colours in the output graph agree with their communities in the generated graph.}
\label{fig:comm}
\end{figure}

\hyperref[greedy3]{\sf LMC} has the lowest run times. It does not detect $\rho$-happy vertices as effectively as \hyperref[greedy1]{\sf Greedy-SoftMHV} and  \hyperref[greedy2]{\sf NGC}, but detects communities more accurately than the other algorithms (Figures~\ref{fig:happy} and~\ref{fig:comm}). For nearly 10\% of the tests, it found a colouring where all the vertices are $\rho$-happy, far greater than 1.41\% of the \hyperref[growth]{\sf Growth-SoftMHV} algorithm (see Figure~\ref{fig:bar1-happy} and Table~\ref{table:num-per})\footnote{It should be noted that \hyperref[growth]{\sf Growth-SoftMHV} occasionally reached its time limit of 40 seconds, and only reports its partial solutions. 
Hence, we may observe improved performance if the time limit is not restricted. 
}. For \hyperref[greedy1]{\sf Greedy-SoftMHV} and \hyperref[greedy2]{\sf NGC}, the percentage of complete $\rho$-happy colouring of their outputs are 
inconsiderable, although \hyperref[greedy2]{\sf NGC} outperforms \hyperref[greedy1]{\sf Greedy-SoftMHV}. 
For certain instances, only \hyperref[greedy3]{\sf LMC} detects communities accurately (Figure~\ref{fig:bar1-comm} and Table~\ref{table:num-per}), despite not being an explicit objective.  

\begin{figure}
\centering
\captionsetup{size=small}
\begin{subfigure}{0.9\textwidth}
\includegraphics[scale=0.6]{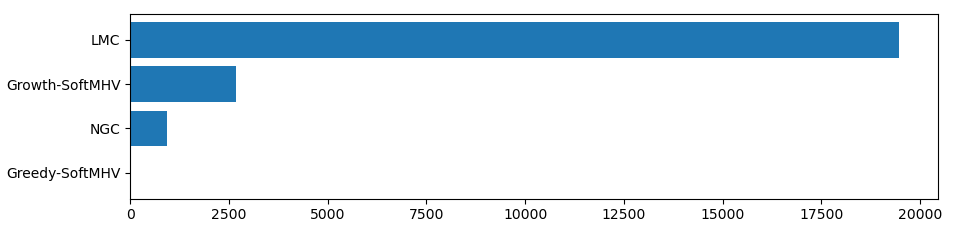}
\caption{}
\label{fig:bar1-happy}
\end{subfigure} 

\begin{subfigure}{0.9\textwidth}  
\includegraphics[scale=0.6]{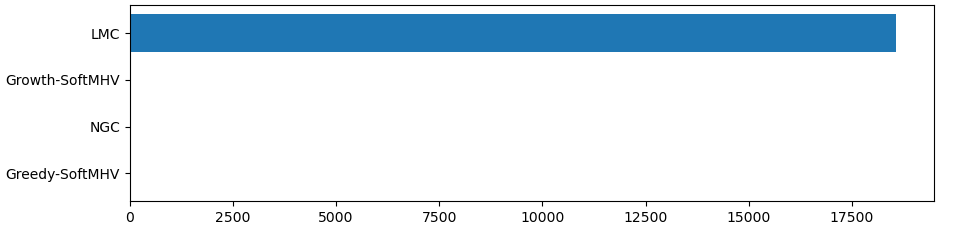}
\caption{}
\label{fig:bar1-comm}
\end{subfigure} 

\caption{(a) The number of times each algorithm generated a colouring that is $\rho$-happy out of 190,000 runs. It can be seen that {\sf LMC} outperforms other algorithms in this regard. \\(b) The number of times that each algorithm detects communities accurately.
}\label{fig:bar}
\end{figure}

\begin{table}%[!ht]
\centering\small
\begin {tabular}{|c||c|c|c|c|} \hline
{\bf Algorithm} & {\bf \# CH$^\ast$} & {\bf \% CH$^\S$} & {\bf\# ACD$^\dagger$} & {\bf\% ACD$^{\ddagger}$} \\ \hline \hline
\hyperref[greedy1]{\sf Greedy-SoftMHV} & 6 & 0.00& 0 & 0.00 \\ \hline
\hyperref[greedy2]{\sf Neighbour Greedy Colouring} & 921 & 0.48 & 0& 0.00 \\ \hline
\hyperref[greedy3]{\sf Local Maximal Colouring}& 19,475 & 10.25 & 18,562 & 9.77 \\ \hline
\hyperref[growth]{\sf Growth-SoftMHV} & 2,688 & 1.41 & 3 & 0.00 \\ \hline
Communities induced colouring & 27,510 & 14.48 & 190,000& 100.00 \\ \hline
\end {tabular}

\medskip
\captionsetup{singlelinecheck=off}
\caption[foo bar]{The number of times and their percentage that the algorithms find colourings that make all the vertices $\rho$-happy versus the number of times and their percentage that they accurately detect communities, out of 190,000 runs. 
\begin{itemize}
\item[$\ast$.] \# CH = ``the number of times it finds a complete $\rho$-happy colouring''
\item[$\S$.] \% CH = ''the percentage of \# CH across 190,000 runs''
\item[$\dagger$.] \# ACD = ``the number of times it accurately detects communities''
\item[$\ddagger.$] \% ACD = ``the percentage of \# ACD across 190,000 runs''
\end{itemize}}\label{table:num-per}
\end{table}

Table~\ref{table:num-inc} shows that \hyperref[greedy3]{\sf LMC} finds the highest number of $\rho$-happy vertices, 987 out of 1,000, when communities induce $\rho$-happy colouring, i.e. when $\rho<\xi$. The best average number of $\rho$-happy vertices across all the 190,000 runs belong to the \hyperref[greedy2]{\sf NGC} with 845 $\rho$-happy vertices. The \hyperref[greedy3]{\sf LMC} has the best accuracy of community detection whether communities induce $\rho$-happy colouring or not.

\begin{table}%[!ht]
\centering\small
\begin {tabular}{|c||c|c|c|c|c|c|} \hline
{\bf Algorithm} & {\bf H$^\ast$} & {\bf AC$^\S$} & {\bf NAC$^\dagger$} & {\bf C$^\heartsuit$} & {\bf ACD$^\diamondsuit$} & {\bf NACD$^{\varclubsuit}$} \\ \hline \hline
\hyperref[greedy1]{\sf Greedy-SoftMHV} & 844 & 973  & 822 & 0.192 & 0.289& 0.175 \\ \hline
\hyperref[greedy2]{\sf Neighbour Greedy Colouring} & 845 & 975  & 823 & 0.199 & 0.313& 0.179 \\ \hline
\hyperref[greedy3]{\sf Local Maximal Colouring}& 671 & 987  & 617 & 0.594 & 0.840& 0.553 \\ \hline
\hyperref[growth]{\sf Growth-SoftMHV} & 636 & 794  & 610 & 0.257 & 0.370& 0.238 \\ \hline
%Communities induced colouring & 27,510 & 14.48 & &190,000& 100.00 \\ \hline
\end {tabular}

\medskip
\captionsetup{singlelinecheck=off}
\caption[foo bar]{Average number of $\rho$-happy vertices and average accuracy of community detection for each algorithm are reported, in the entire test, when communities induce a $\rho$-happy colouring and when they do not. 
\begin{itemize}
\item[$\ast$.]  H = ``Average number of $\rho$-happy vertices across 190,000 runs''
\item[$\S$.] AC = ``Average number of $\rho$-happy vertices when communities induce a complete $\rho$-happy colouring ($\rho<\xi$)''
\item[$\dagger$.] NAC = ``Average number of $\rho$-happy vertices when communities do not induce a complete $\rho$-happy colouring ($\rho\geq\xi$)''
\item[$\heartsuit$.] C = ``Average accuracy of community detection across 190,000 runs''
\item[$\diamondsuit$.] ACD = ``Average accuracy of community detection when communities induce a complete $\rho$-happy colouring ($\rho<\xi$)''
\item[$\varclubsuit.$]  NACD = ``Average accuracy of community detection when communities do not induce a complete $\rho$-happy colouring ($\rho\geq\xi$)''
\end{itemize}}\label{table:num-inc}
\end{table}

\subsection{Test 2: graphs on different numbers of vertices}\label{sec:scale}

The four algorithms in Section~\ref{sec:alg}, especially \hyperref[growth]{\sf Growth-SoftMHV}, are not designed to be run for large graphs. Therefore, if we are interested to see their scalability, we need to limit the number of total runs. For this, another test was necessary. We remind the reader again that we ran the algorithms over randomly generated graphs on $200\leq n<$ 3,000 vertices for this test. Other parameters are randomly chosen over the same interval of the previous test (with an insist on $q\leq \frac{p}{2}$ to make the communities distinguishable). For each $n$ we have generated 10 instances. The results of 28,000 runs of the algorithms are summarised in the following paragraphs and figures.

First, we note that the data for this test agrees with the former test. For example, the average ratio of $\rho$-happy vertices in the induced colouring by the communities is given in Figure~\ref{fig:r-k-comm-new}, which is consistent with Figure~\ref{fig:k-r-happy}, especially when $\rho$ and $k$ are small enough, the induced colouring by the communities can make the entire vertex set $\rho$-happy.

\begin{figure}
    \centering
    \includegraphics[scale=0.6]{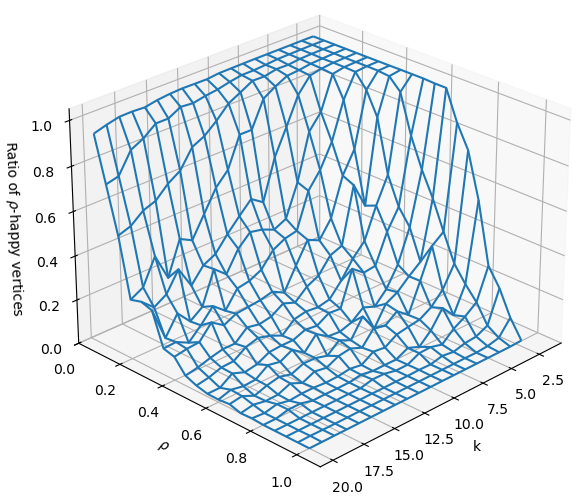}
    \caption{The 3D chart of the average ratio of $\rho$-happy vertices in the induced colouring by communities, with $\rho$ and $k$ as variables. The data is from the second test, where the number of vertices is between 200 and 3,000.}
    \label{fig:r-k-comm-new}
\end{figure}

The time limit for this part is 120 seconds. The only algorithm that occasionally reports incomplete solutions due to this time limit is the \hyperref[growth]{\sf Growth-SoftMHV}, which experiences the sharpest increase in its running time. As expected, increasing the number of vertices directly relates to the running time of the algorithms, which is obvious from Figure~\ref{fig:n-time}. At the same time, \hyperref[greedy3]{\sf LMC} shows a small increase in its running time in comparison to the other algorithms. 

Like the former test for graphs on 1,000 vertices, \hyperref[greedy1]{\sf Greedy-SoftMHV} and \hyperref[greedy2]{\sf NGC} perform almost similarly in finding $\rho$-happy vertices, see Figure~\ref{fig:n-happy}. The average number of $\rho$-happy vertices found by the \hyperref[growth]{\sf Growth-SoftMHV} divided by the total number of vertices drops as the number of vertices increases. This trend for \hyperref[greedy3]{\sf LMC}, which has the highest correlation with communities, remains increasing as Figure~\ref{fig:n-happy} affirms. This complies with Theorem~\ref{th:infinity} that communities can find more $\rho$-happy vertices when $n$ becomes larger.

Accuracy of community detection of \hyperref[greedy1]{\sf Greedy-SoftMHV}, \hyperref[greedy2]{\sf NGC} and \hyperref[growth]{\sf Growth-SoftMHV} drop sharply, according to Figure~\ref{fig:n-comm}. However, \hyperref[greedy3]{\sf LMC} maintains its community detection accuracy between 0.4 and 0.5. 

\begin{figure}

\captionsetup{size=small}
\begin{subfigure}{0.45\textwidth}
\includegraphics[scale=0.42]{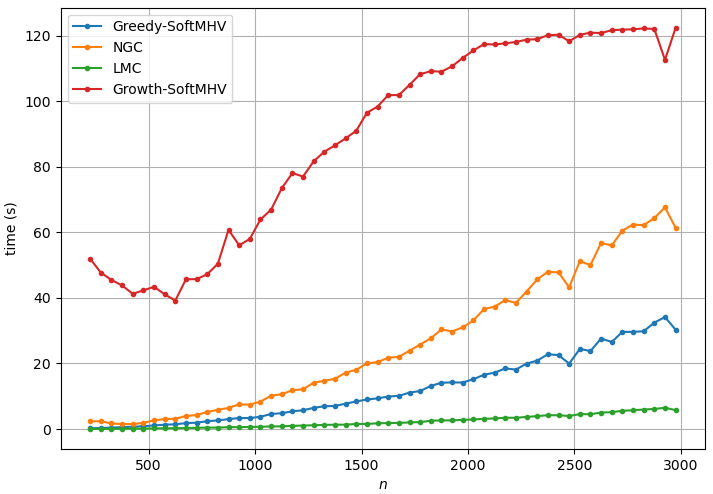} 
\caption{}\label{fig:n-time}
\end{subfigure} 
\hspace{2mm}
\begin{subfigure}{0.5\textwidth}  
\includegraphics[scale=0.42]{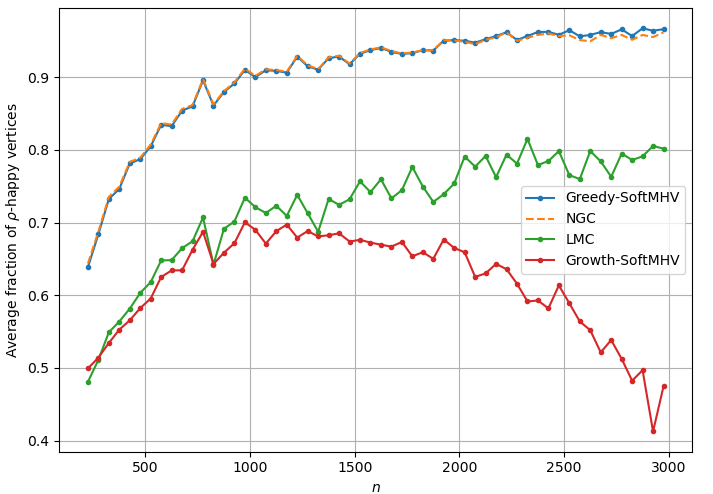}
\caption{}\label{fig:n-happy}
\end{subfigure}

\centering
\begin{subfigure}{0.5\textwidth}  
\includegraphics[scale=0.45]{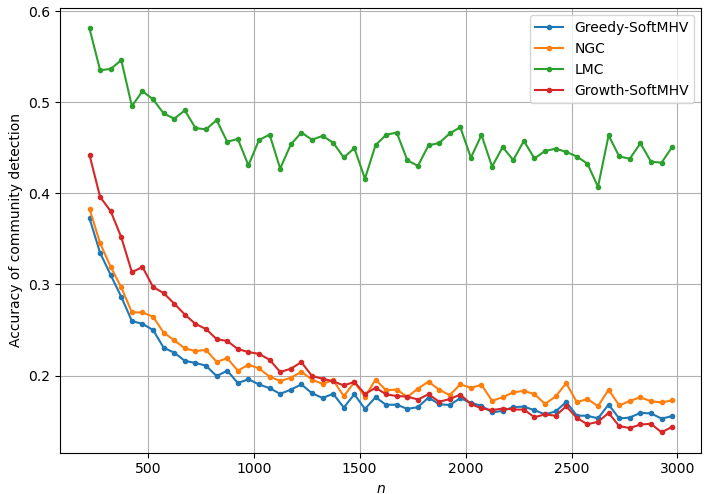} 
\caption{}\label{fig:n-comm}
\end{subfigure} 
\caption{Comparing (a) running time, (b) average number of $\rho$-happy vertices and (c) average accuracy of community detection of the four algorithms. The data is from the second test whose graphs are on $200\leq n<$ 3,000 vertices.}
\label{fig:n-v}
\end{figure}

In general, according to Figure~\ref{fig:n-v}, \hyperref[greedy3]{\sf LMC} is the most affordable algorithm for finding $\rho$-happy vertices, especially when another objective is community detection. In contrast, \hyperref[growth]{\sf Growth-SoftMHV} is costly and may perform poorly when the number of vertices is large.

\subsection{Performance of \hyperref[greedy3]{\sf LMC}}
In this section, we briefly investigate the performance of \hyperref[greedy3]{\sf LMC}. As it can be seen in Line 5 of Algorithm~\ref{greedy3}, an uncoloured vertex is randomly chosen from the neighbourhood of already coloured vertices. It is possible that choosing another vertex at a step affects the quality of the final solution. To investigate this, we performed \hyperref[greedy3]{\sf LMC} 10 times over each of the 28,000 graphs of the second test with $200\leq n<$ 3,000 vertices, introduced in Section~\ref{sec:scale}. The result of running \hyperref[greedy3]{\sf LMC} is reported in Figure~\ref{fig:LMC}.

\begin{figure}

\captionsetup{size=small}
\begin{subfigure}{0.5\textwidth}
\includegraphics[scale=0.42]{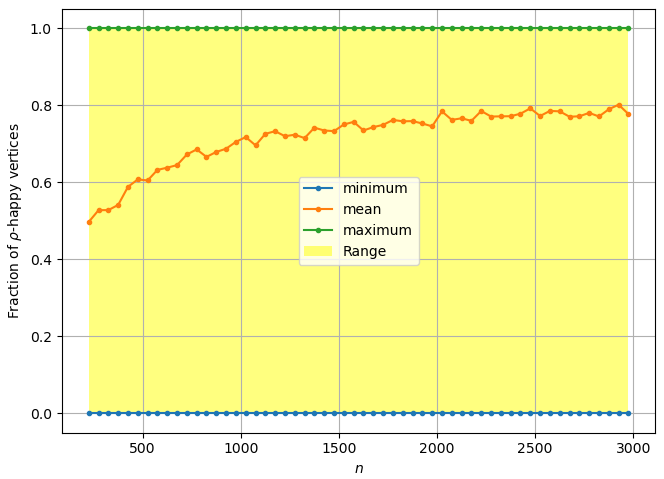} 
\caption{}\label{fig:lmc-general}
\end{subfigure} 
\hspace{2mm}
\begin{subfigure}{0.5\textwidth}  
\includegraphics[scale=0.42]{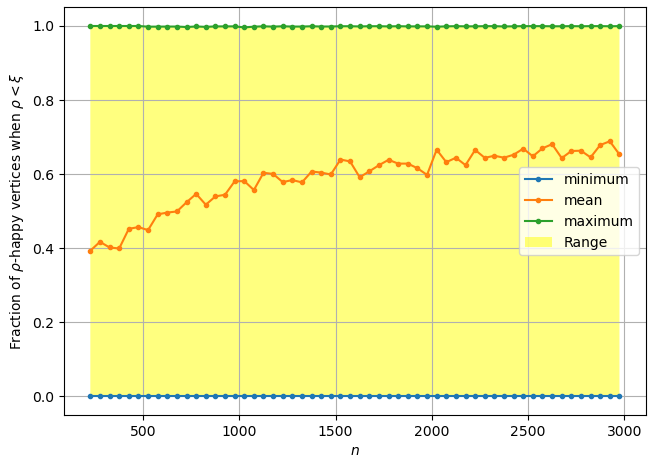}
\caption{}\label{fig:LMC-greater-xi}
\end{subfigure} 

\begin{subfigure}{0.5\textwidth}
\includegraphics[scale=0.42]{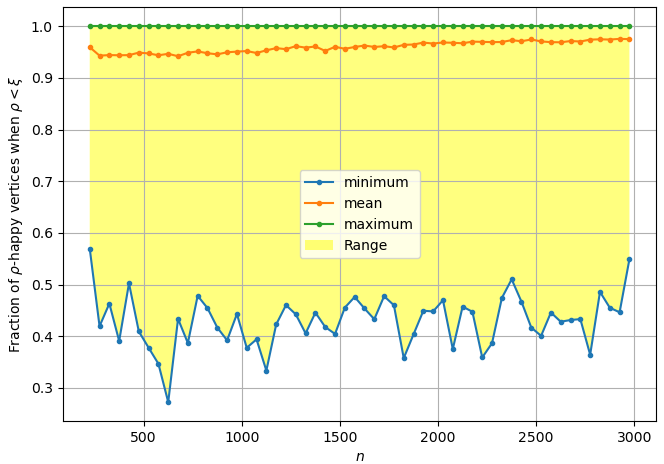} 
\caption{}\label{fig:LMC-less-xi}
\end{subfigure} 
\hspace{2mm}
\begin{subfigure}{0.5\textwidth}  
\includegraphics[scale=0.42]{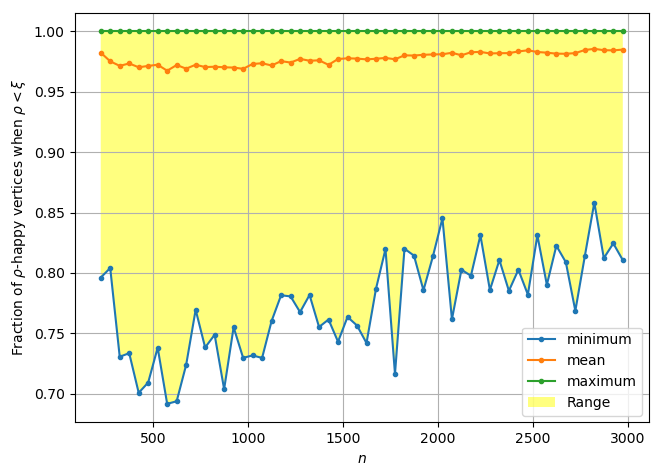}
\caption{}\label{fig:LMC-less-xi2}
\end{subfigure}

\caption{The performance of \hyperref[greedy3]{\sf LMC} (a) when there is no constraints on $\rho$, (b) when $\rho>\xi$, (c) when $\rho<\xi$ and (d) when $\rho<\frac{\xi}{2}$. The data were collected by running \hyperref[greedy3]{\sf LMC} 10 times for each of 28,000 randomly generated graphs with $200\leq n<$ 3,000 vertice of the second test.}
\label{fig:LMC}
\end{figure}

Figure~\ref{fig:lmc-general} shows the minimum, mean and maximum of the number of $\rho$-happy vertices divided by the number of vertices. In this case, the minimum is almost always 0 while the maximum is almost always 1. The mean, however, is between 0.5 and 0.8, and as $n$ increases, the mean also shows a tendency to rise. Figures~\ref{fig:LMC-greater-xi}, \ref{fig:LMC-less-xi} and \ref{fig:LMC-less-xi2} are for the same fraction with additional restrictions of $\rho>\xi$, $\rho<\xi$ and $\rho<\frac{\xi}{2}$, respectively. When $\rho>\xi$, the maximum and minimum do not change, but the mean falls between 0.4 and 0.7. When $\rho<\xi$ (and when $\rho<\frac{\xi}{2}$), the maximum is again 1, but the minimum increases to around 0.4 (resp. 0.75) and the mean increases to around 0.95 (resp. 0.98), which again tends to be higher as $n$ increases.

\section{Conclusion}

This paper theoretically and experimentally establishes a connection between the soft happy colouring of graphs and their community structure. Since conventional happy colouring is a special case of soft happy colouring, and the main intention of defining these sorts of colouring was homophily (which is a property of social networks explainable by community structures), soft happy colouring is a more powerful tool to achieve the intended goal.

One of the achievements of this paper is to theoretically show that for the induced colouring by communities of a graph in the SBM, almost surely all the vertices are $\rho$-happy if $\rho$ is small enough. Another is finding the threshold $\xi$ which indicates how small this $\rho$ should be. The other is showing that if $\rho<\lim_{n\rightarrow \infty} \xi$, then the communities almost certainly induce a $\rho$-happy colouring on the vertices. Experimental analysis of over large number of randomly generated graphs perfectly verified the theoretical expectations. 

Additionally, we developed algorithms to tackle the soft happy colouring problem. We adopt two existing algorithms from \cite{ZHANG2015117} and present a novel algorithm, {\sf Local Maximal Colouring}. The algorithm has no reliance on $\rho$, however, its output shows considerable correlation with the communities of the tested networks while it has an inexpensive running cost. Therefore, for practical purposes, multiple runs of LMC are conducted to find the best possible result for the both objectives of soft happy colouring and community detection. 

There are numerous possibilities for future work. Further studies can include considering the relations of (soft) happy colouring with communities of random graph models other than the simplified stochastic block model $\mathcal{G}(n,k,p,q)$. Looking at the induced colouring of detected communities by algorithms, such as spectral clustering, for their number of $\rho$-happy vertices can also be interesting on its own because these algorithms may find communities different from what graph generators produce. Another area for future work can be maximizing the number of $\rho$-happy coloured vertices using mathematical/constraint programming, metaheuristics (tabu search and evolutionary algorithms), and matheuristics (CMSA and CMSA-TS) for $\rho$-happy colouring similar to~\cite{THIRUVADY2022101188} by Thiruvady and Lewis. Improving known algorithms for soft happy colouring, similar to what Ghirardi and Salassa~\cite{Ghirardi2022-lb} suggest, is another possible study in future. Especially, one can think of improvements to the \hyperref[greedy3]{\sf LMC} because of its low time cost and its output's correlation with the community structure of its input.

\section*{Declarations of interest} 
None

\bibliographystyle{plain}
\bibliography{HC.bib}
\end{document}